\documentclass{article}
\usepackage{amsmath}
\usepackage{amsthm}
\usepackage{amsfonts}
\usepackage{amssymb}
\usepackage{graphicx}
\usepackage{enumerate}
\usepackage{a4wide}
\usepackage{pgfplots}
\usepackage{algorithm}
\usepackage{algpseudocode}
\usepackage{subcaption}
\usepackage{todonotes}
\usepackage{apxproof}
\usepackage{multirow}
\usepackage{xurl}
\usepackage{authblk}
\newtheorem{theorem}{Theorem}[section]
\newtheorem{lemma}[theorem]{Lemma}
\newtheorem{corollary}[theorem]{Corollary}
\newtheorem{proposition}[theorem]{Proposition}
\theoremstyle{definition}
\newtheorem{definition}[theorem]{Definition}
\newtheorem{assumption}{Assumption}
\theoremstyle{remark}
\newtheorem{remark}[theorem]{Remark}
\newtheorem{example}[theorem]{Example}
\renewcommand{\algorithmicrequire}{\textbf{Input:}}
\renewcommand{\algorithmicensure}{\textbf{Output:}}
\newcommand{\E}{\mathbb{E}}
\newcommand{\R}{\mathbb{R}}
\newcommand{\N}{\mathbb{N}}
\newcommand{\J}{\mathbb{J}}

\newcommand{\bx}{\boldsymbol{x}}

\newcommand{\tr}{\operatorname{Tr}}

\newcommand{\argmin}{\operatorname*{argmin}}
\newcommand{\argmax}{\operatorname*{argmax}}

\newcommand{\dom}{{\mathcal X}}
\newcommand{\bmu}{\boldsymbol{\mu}}
\newcommand{\Pb}{{\mathcal P}^B(\dom)}
\newcommand{\bUb}{{\bf U}_B}

\newcommand{\norm}[1]{\left\|#1\right\|}

\title{Batch-based Bayesian Optimal Experimental Design in Linear Inverse Problems}
\author[$\ddagger$]{Sofia M\"akinen}
\author[$\dagger$]{Andrew Duncan}
\author[$\ddagger$]{Tapio Helin}
\affil[$\dagger$]{Imperial College London, UK}
\affil[$\ddagger$]{LUT University, Finland}
\date{\today}

\begin{document}
\maketitle
\begin{abstract}
Experimental design is central to science and engineering. A ubiquitous challenge is how to maximize the value of information obtained from expensive or constrained experimental settings. Bayesian optimal experimental design (OED) provides a principled framework for addressing such questions. In this paper, we study experimental design problems such as the optimization of sensor locations over a continuous domain in the context of linear Bayesian inverse problems. We focus in particular on batch design, that is, the simultaneous optimization of multiple design variables, which leads to a notoriously difficult non-convex optimization problem.
We tackle this challenge using a promising strategy recently proposed in the frequentist setting, which relaxes A-optimal design to the space of finite positive measures. Our main contribution is the rigorous identification of the Bayesian inference problem corresponding to this relaxed A-optimal OED formulation. Moreover, building on recent work, we develop a Wasserstein gradient-flow–based optimization algorithm for the expected utility and introduce novel regularization schemes that guarantee convergence to an empirical measure. These theoretical results are supported by numerical experiments demonstrating both convergence and the effectiveness of the proposed regularization strategy.
\end{abstract}

\section{Introduction}

The design of experiments is a central theme across science and engineering. Data collection is often restricted by practical constraints such as physical restrictions (consider radiation exposure in medical imaging) or high costs (consider seismic imaging), to name a few.
The allocation of measurement effort, the positioning of sensors, and the timing of observations are all decisions that shape the quality of inference one can draw from an experiment. This ubiquity has motivated the development of a wide range of approaches, from heuristic rules of thumb to formal optimization-based strategies, with the goal of extracting maximal value from every measurement.

A particularly principled framework for addressing such questions is Bayesian optimal experimental design (OED). In this approach, prior knowledge about the system under study is combined with a probabilistic model for future observations. Moreover, a utility function quantifies the expected benefit of making a particular set of measurements. Thus, Bayesian design naturally accounts for uncertainties in both parameters and data, allowing one to reason explicitly about information gain and decision quality. Standard choices for the optimality condition include the A- and D-optimality criteria, of which we consider the A-optimality \cite{review_cv_boed}. While exact computation of these quantities is often infeasible, advances in computation and approximation methods have enabled Bayesian design to be applied to increasingly complex systems.

In this work, we consider experimental design problems defined over a continuous domain. Our focus is on batch-based optimization, where multiple design variables, such as sensor locations, are optimized simultaneously. Such batch design problems are notoriously nonconvex and exhibit multiple local minima, which makes them challenging to solve reliably. A promising recent strategy to address these difficulties is to convexify the optimization problem by relaxing it to the space of probability measures. This relaxation has been successfully employed in related contexts, both in Bayesian optimization \cite{crovini2022batch} and in frequentist settings \cite{2ndBatchOptArticle, jin2024continuous, shi2026gradient}.

Our work is particularly closely aligned with the objectives studied in \cite{2ndBatchOptArticle}. Let us introduce some notation to make this connection precise. Suppose function $g$ in a reproducing kernel Hilbert space on the domain $\dom$
is observed at a point $x \in \dom$, connecting it to an unknown function of interest $f$ through the measurement model
\begin{equation}
\label{eq:informal_obs_model}
y(x) = (Af)(x) + \epsilon,
\end{equation}
where $A$ is a linear operator and $\epsilon$ is additive Gaussian noise with variance $\sigma^2$, independent of both $f$ and $x$.
In this setting, \cite{2ndBatchOptArticle} investigates, in particular, the frequentist A-optimal measurement strategy, which under suitable assumptions on the measurement model amounts to solving
\begin{equation}
\label{eq:informal_Aopt_freq}
x_{Aopt} \in \argmin_{x \in X} \tr \left[(A_x^*A_x)^{-1}\right],
\end{equation}
where $A_x$ denotes the operator $A$ followed by evaluation at $x$.
The main idea in \cite{2ndBatchOptArticle} is to relax the optimization task in \eqref{eq:informal_Aopt_freq} to the space of probability measures ${\mathcal P}(\dom)$ by informally setting
\begin{equation}
\label{eq:informal_Aopt_freq_meas}
\mu_{Aopt} \in \argmin_{\mu \in {\mathcal P}(\dom)} \tr \left[\left(\int_X A_z^*A_z  \mu(dz)\right)^{-1}\right],
\end{equation}
where the minimization functional can be shown to be convex with respect to the probability measure $\mu$. The relaxation can be understood by noting that the formulation \eqref{eq:informal_Aopt_freq_meas} coincides with \eqref{eq:informal_Aopt_freq} when the search for $\mu$ is restricted to Dirac measures $\{\delta_x \in {\mathcal P}(\dom) \; | \; x \in \dom\}$. The key contribution in \cite{2ndBatchOptArticle} is to formulate a Wasserstein gradient flow scheme for optimizing $\mu$ by particle approximations. 

The relaxation extends naturally to a batch-based setting. If one seeks to optimize a batch of sensor locations ${x_1,\dots,x_B}\subset \dom$ in \eqref{eq:informal_Aopt_freq}, then the corresponding object in \eqref{eq:informal_Aopt_freq_meas} is the measure $\sum_{j=1}^B \delta_{x_j}$. This suggests extending the relaxation to all positive measures satisfying $\mu(\dom)=B$. The resulting shift, from a highly non-convex search over $\otimes_{j=1}^B \dom$ to a convex optimization problem over measures on $\dom$, is precisely what makes this approach appealing for large-scale multi-parameter design problems.

To develop this approach further, in this paper, we study whether the experimental design setting of \eqref{eq:informal_Aopt_freq} and \eqref{eq:informal_Aopt_freq_meas} can be extended to a Bayesian framework. When the prior distribution of $f$ is Gaussian, the posterior induced by \eqref{eq:informal_obs_model} remains Gaussian. In this context, the A-optimality principle amounts to minimizing the posterior variance, however, a coherent relaxation of the problem and its interpretation in the Bayesian paradigm is not trivial. This observation motivates the central questions of our work: what observation does the relaxation correspond to in the Bayesian setting and is there a corresponding relaxation of the design problem aligned with \eqref{eq:informal_Aopt_freq_meas}?

\subsection{Our Contribution}

In this paper, we rigorously identify the Bayesian inference task associated to the frequentist batch-based A-optimal design problem proposed in \cite{2ndBatchOptArticle}, when relaxed to finite positive measures with fixed mass and interpreted through the Bayesian paradigm. 
This characterization is developed in Sections \ref{sec:relaxing_BIP} and \ref{sec:BOED}, where 
\begin{itemize}
    \item[(1)] we propose a non-parametric Bayesian inference task based on observing a Gaussian random process indexed over a Hilbert space associated to the relaxed design $\mu$, and,
    \item[(2)] we derive the corresponding posterior covariance, which leads to an infinite-dimensional expected utility functional in Section~\ref{subsec:Aopt}.
\end{itemize} 
Importantly, in Proposition \ref{prop:concavity} we prove that this expected utility is concave with respect to the design measure $\mu$.

Following previous work \cite{2ndBatchOptArticle}, we introduce a Wasserstein gradient flow for the particle-based optimization of the expected utility. To this end, we derive the first variation and the gradient of the expected utility in Propositions \ref{prop:frechet_derivative} and \ref{prop:gradient_of_U}. 

While the optimization method is developed for positive measures with fixed mass, another key contribution of this work is the proposal of an alternative approach to accelerate convergence. Specifically, each of the $B$ design variables is represented by a probability measure $\mu_j$, $j=1,\dots,B$, and the optimization problem is lifted back to $\dom^B$, that is, the optimization is carried out over the tensor product $\bmu = \bigotimes_{j=1}^B \mu_j$. This formulation enables us to \emph{regularize} the expected utility through interplay of the design measures $\mu_j$.

We introduce a two-fold regularization strategy, whose relative influence can be tuned by
reweighting:
\begin{itemize}
    \item[(A)] First, we penalize the expected utility by the empirical variance of each ensemble
    in order to ensure concentration toward a single-point measure.
    \item[(B)] Second, we penalize proximity between distinct ensembles via the negative maximum
    mean discrepancy, thereby discouraging multiple ensembles from collapsing to the same
    point measure.
\end{itemize}
While convergence to multiple point measures, or collapse to a single joint point
measure, may be justified by the inference task and the resulting information gain, these
regularization mechanisms provide practical tools for enforcing additional constraints
arising in experimental settings, such as prescribed minimum distances between sensor
locations. These effects are demonstrated through numerical examples in Section \ref{sec:numerical_simulations}.

\subsection{Literature overview}

Bayesian experimental design is a mature research area supported by a substantial and evolving literature. Recent survey-style contributions include \cite{Huan_Jagalur_Marzouk_2024, rainforth2024modern, ryan16}. This work focuses on A-optimality criterion which is a classical utility construction discussed e.g. in \cite{review_cv_boed}. We also highlight the recent work to promote Wasserstein distance based information criterion in \cite{helin2025bayesian}, motivated by the general notion of a valid information measure proposed by Ginebra in \cite{ginebra2007}.

Inverse problems form a class of high-dimensional inference tasks in which unknown parameters are linked to observations through complex forward models, often governed by partial differential equations. The requirement that experimental design criteria remain well defined across refinement levels has motivated extensions of classical Bayesian optimal experimental design to infinite-dimensional settings; for example, expected information gain and A-optimality criterion admit rigorous nonparametric formulations as developed in \cite{alexanderianAandDOptimalityInfDim}. See also the review \cite{review_aa}.

The joint optimization of multiple sensor locations prior to any data acquisition has been widely studied in the literature. Owing to the highly nonconvex nature of the resulting design problems, many existing approaches rely on greedy selection procedures, which consequently provide limited guarantees of global optimality. For recent contributions in the setting of general PDE-based inverse problems, see \cite{aretz2021sequential, go2025sequential, cui2025subspace}. In particular, Bayesian sensor placement for electrical impedance tomography has been investigated in \cite{hyvonen2024bayesian, karimi2021optimal}. This line of work should be distinguished from a different sequential paradigm in which data are collected after each newly selected sensor location, see e.g. \cite{helin2022edge, burger2021sequentially}.

Gradient-flow–based approaches to optimal experimental design in the frequentist setting have been investigated in \cite{2ndBatchOptArticle, jin2024continuous, shi2026gradient}, where the optimization problem is formulated over the space of probability measures. In particular, \cite{shi2026gradient} studies Wasserstein gradient flows for E-optimal design in regression models, while \cite{2ndBatchOptArticle} and \cite{jin2024continuous} examine A- and D-optimal continuous designs, focusing respectively on linear and nonlinear settings. Separately, \cite{crovini2022batch} explores the use of Wasserstein and Stein gradient flows for batch Bayesian optimization aimed at selecting multiple evaluation points in parallel. For more broad discussion on Wasserstein gradient flows, we refer the reader to \cite{gradientflow_book, gradFlowsOverview}. Optimization problems over probability measures via gradient flows have also been considered for general machine learning tasks, see \cite{chen2025accelerating}.\\

\noindent {\bf Structure of the Paper.} Section \ref{sec:relaxing_BIP} introduces the Bayesian inference task giving rise to the relaxed design problem over finite measures. The corresponding Bayesian OED task and the A-optimal expected utility are studied in Section \ref{sec:BOED}. Section \ref{sec:Wasserstein_gradient_flow} introduces the Wasserstein gradient flow for the optimization to the expected utility, while Section \ref{sec:regularization} is devoted to the practical regularization strategies in batch-based optimization. Finally, Section \ref{sec:numerical_simulations} illustrates the method for two inverse problems with one-dimensional Poisson and two-dimensional time-harmonic Schr\"odinger equation, with a sensitivity study regarding the regularization in Section \ref{subsec:sensitivity}.

\section{Relaxation of the Bayesian inference problem}
\label{sec:relaxing_BIP}

Let $H_k$ be a separable reproducing kernel Hilbert space on a bounded smooth domain $\dom \subset \R^n$, generated by a bounded kernel $k$ satisfying the required regularity conditions. Since the kernel remains fixed throughout the paper, we write $H = H_k$ for brevity. We consider optimizing the sensor location $x\in \dom$ for recovering $f\in U$ in a separable Hilbert space $U$ from
\begin{equation}
    g(x) = \langle K_x, A f\rangle_{H} + \epsilon\in \R
\end{equation}
where $A: U \to H$ is linear and bounded, $K_x\in H$ is the reproducing element and $\epsilon$ is normally distributed noise independent of the sensor location. 

To relax the design problem, let $\mu$ be a positive finite measure on $\dom$, and for brevity write $L^2(\mu) := L^2(\dom; \mu)$. Also, let us introduce the embedding operator $\iota_\mu: H_k \to L^2(\mu)$. In what follows, we study an inference problem, where the observation $g_\mu$ is a Gaussian random process indexed by $L^2(\mu)$ satisfying
\begin{equation}
    \label{eq:relaxed_inference}
    g_\mu(h) = \langle A_\mu f, h\rangle_{L^2(\mu)} + \epsilon_\mu(h),
\end{equation} 
where we write $A_\mu = \iota_\mu A : U \to L^2(\mu)$ and $\epsilon_\mu$ is a isonormal Gaussian process over $L^2(\mu)$ \cite[Def. 1.1.1]{nualart2006malliavin}.

The non-parametric Bayesian inference of the signal $f \in U$ is well-understood \cite{ghosal2017fundamentals}. In particular, for a Gaussian prior distribution with a linear Hilbert-space valued observational model, the posterior distribution was established in \cite{mandelbaum1984linear} and has been revisited in many subsequent work (see e.g. \cite{luschgy1996linear, hairer2005analysis}). While the random process formulation such as in \eqref{eq:relaxed_inference} with signal contaminated with white noise has received less attention, it is well-known that there exists a regular version of the conditional distribution of $f$ given $g_\mu$ and that this conditional distribution is Gaussian \cite{lehtinen1989linear}. 

For the reader’s convenience we recall a standard construction of the posterior covariance in Theorem \ref{thm:posterior_covariance1} that simplifies the setting of \cite{lehtinen1989linear}. Starting from the Hilbert space $L^2(\mu)$, one introduces a Hilbert scale $Z^\mu \subset L^2(\mu) \subset (Z^\mu)'$, where $Z^\mu$ is a separable Hilbert space that embeds continuously and densely into $L^2(\mu)$ with the property that realizations of $\epsilon_\mu$ are 'contained' in $(Z^\mu)'$. This procedure is a Hilbert-space version of the abstract Wiener space construction that is standard in modern Gaussian measure theory \cite{stroock2023gaussian}. 

More concretely, let $\{e_j\}_{j\in\J}$ be the orthonormal basis of $L^2(\mu)$, where $\J$ is either finite or countable. We define 
\begin{equation}
    Z_s^\mu = \left\{z = \sum_{j\in\J} z_j e_j \; \bigg| \; \sum_{j\in\J} j^{2s} z_j^2 < \infty \right\}, \quad s\in \R, 
\end{equation}
with the interpretation $Z_0^\mu = L^2(\mu)$ and $(Z_s^\mu)' = Z_{-s}^\mu$. For the moment, suppose $\J=\N$. We immediately observe that the random series $\sum_{j=1}^J \xi_j e_j$, $\xi_j\in{\mathcal N}(0,1)$ i.i.d., converges to a well-defined random variable $\tilde \epsilon_\mu$ in $L^2(\Omega; Z_{-s}^\mu)$ such that 
\begin{equation}
    \epsilon_\mu(z) = \langle \tilde \epsilon_\mu, z\rangle_{Z_{-s}^\mu\times Z_s^\mu},
\end{equation}
in distribution for any $z\in Z_s^\mu$, when $s$ is large enough, i.e., $\sum_{j\in\J} j^{-2s}<\infty.$ Clearly, for a finite index the claim is trivial.

We can now consider the joint distribution $(\tilde g_\mu, f)$ in the product space $Z_{-s}^\mu \times U$, where $\tilde g_\mu$ conditioned on $f$ has the distribution of $A_\mu f + \tilde \epsilon_\mu$ and $s$ is fixed large enough. With the abuse of notation, we will identify $g_\mu$ and $\tilde g_\mu$ (as well as, $\epsilon_\mu$ and $\tilde \epsilon_\mu$) in what follows. Moreover, below, we write $A_\mu^*$ for the adjoint of $A_\mu : U \to L^2(\mu)$, while $A_\mu'$ denotes the adjoint of $A_\mu : U \to Z_{-s}^\mu$, where the latter adjoint is obtained the canonical embedding of the range $L^2(\mu)$ into $Z_{-s}^\mu$.

\begin{theorem}
\label{thm:posterior_covariance1}
Suppose the random variable $(g_\mu, f) \in (Z_{-s}^\mu, U)$ has the joint distribution induced by the marginal $f\in {\mathcal N}(0, C_f)$ on $U$ and the likelihood induced by \eqref{eq:relaxed_inference}. Then the conditional distribution of $f$ given $g_\mu$ has a regular Gaussian version with a trace-class covariance operator
\begin{equation}
    \label{eq:posterior_covariance_form1}
    C_{post}^\mu = C_f - C_f A_\mu^* \left(A_\mu C_f A_\mu^* + I\right)^{-1} A_\mu C_f : U\to U.
\end{equation}
\end{theorem}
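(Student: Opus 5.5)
The strategy is to view $(g_\mu, f)$ as a jointly Gaussian vector in the separable Hilbert space $Z_{-s}^\mu \times U$ and use the standard Gaussian conditioning formula. The subtlety is that the noise covariance is the identity on $L^2(\mu)$, which is not trace class there. It is, however, the covariance of a genuine Gaussian measure on $Z_{-s}^\mu$. Throughout, fix $s$ with $\sum_{j\in\J} j^{-2s}<\infty$. Let $\iota_s : L^2(\mu)\to Z_{-s}^\mu$ denote the canonical embedding, so that $A_\mu$ viewed into $Z_{-s}^\mu$ is $\iota_s A_\mu$. Its adjoint with respect to the $Z_{-s}^\mu$ inner product is $A_\mu' = A_\mu^*\iota_s^*$.

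The first step is to compute the joint covariance. Since $f$ and $\epsilon_\mu$ are independent, the pair $(g_\mu,f)$ is centered Gaussian on $Z_{-s}^\mu\times U$, with blocks
\[
C_{ff}=C_f,\qquad C_{gf}=\iota_s A_\mu C_f,\qquad C_{gg}=\iota_s\bigl(A_\mu C_f A_\mu^*+I\bigr)\iota_s^* .
\]
The second block uses the identity $\E\langle \tilde\epsilon_\mu,z\rangle\langle\tilde\epsilon_\mu,z'\rangle=\langle z,z'\rangle_{L^2(\mu)}$ for $z,z'\in Z_s^\mu$, extended by Riesz identification. Each block is trace class on its space: $C_f$ by assumption, and $\iota_s\iota_s^*$ because $\tr(\iota_s\iota_s^*)=\sum_j j^{-2s}$. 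The conditional law of $f$ given $g_\mu$ is therefore a regular Gaussian. This follows from the general result on Gaussian conditioning in separable Hilbert spaces (Mandelbaum, or equivalently Lehtinen et al.), and its covariance is the Schur complement
\[
C_{post} = C_f - C_{fg}C_{gg}^{\dagger}C_{gf},
\]
interpreted as a limit of finite-rank projections in the usual way.

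The second step, which I expect to be the main obstacle, is to identify this Schur complement with the formula \eqref{eq:posterior_covariance_form1}. The difficulty is that $C_{gg}$ is compact with dense range on $Z_{-s}^\mu$, so its inverse is unbounded and the pseudo-inverse has to be handled with care. I would remove it by factoring. Set $M:=A_\mu C_f A_\mu^*+I$, which is bounded, self-adjoint and $\ge I$ on $L^2(\mu)$, hence boundedly invertible. Then $C_{gg}=\iota_s M\iota_s^*$ and $C_{fg}=C_fA_\mu^*\iota_s^*$. One checks directly that the operator
\[
T := C_fA_\mu^* M^{-1}(\iota_s^*)^{-1},
\]
defined on the range of $\iota_s^*$, satisfies $T C_{gg} = C_{fg}$. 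Moreover, $T$ extends to the measurable linear map $L$ that yields the conditional mean $\E[f\mid g_\mu]=Lg_\mu$. The posterior covariance then equals $C_f - T C_{gf} = C_f - C_fA_\mu^*M^{-1}A_\mu C_f$, since $(\iota_s^*)^{-1}\iota_s^*$ cancels.

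To make the cancellation rigorous, I would avoid working with unbounded inverses directly and instead pass through finite-dimensional truncations. Let $P_J$ be the projection onto $\mathrm{span}\{e_1,\dots,e_J\}$ and condition on $P_Jg_\mu$. This is a finite-dimensional observation with identity noise, and the classical formula gives
\[
C_J=C_f-C_fA_\mu^*P_J(P_JA_\mu C_fA_\mu^*P_J+P_J)^{-1}P_JA_\mu C_f .
\]
Because the $\sigma$-algebras $\sigma(P_Jg_\mu)$ increase and generate $\sigma(g_\mu)$, martingale convergence shows that $C_J$ converges to the true posterior covariance. On the other side, $P_J\to I$ strongly on $L^2(\mu)$, and the inverses $(P_JMP_J)^{-1}$ on $\mathrm{ran}P_J$ are uniformly bounded by $1$ because $M\ge I$. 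Together these give strong convergence $P_J(P_JMP_J)^{-1}P_J\to M^{-1}$; this follows from a standard resolvent-type identity. Consequently $C_J\to C_{post}^\mu$.

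Finally, the posterior covariance is trace class because $0\le C_{post}^\mu\le C_f$ and $C_f$ is trace class. This gives the claimed formula \eqref{eq:posterior_covariance_form1}. When $\J$ is finite, the truncation step is unnecessary and the result is the classical finite-dimensional formula.
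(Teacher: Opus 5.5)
Your proof is correct, but its rigorous core takes a different route from the paper's.

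The paper works directly with Mandelbaum's formula $C_f - C_{f,g_\mu}C_{g_\mu}^{-1}C_{g_\mu,f}$ on $Z_{-s}^\mu$. It writes $A_\mu' = A_\mu^*R^{-1}$ and $C_{g_\mu}=(A_\mu C_fA_\mu^*+I)R^{-1}$, where $R:Z_s^\mu\to Z_{-s}^\mu$ is the Riesz map; your $\iota_s^*$ is exactly $R^{-1}$. It then cancels $R$ by solving $(A_\mu C_fA_\mu^*+I)R^{-1}w=u$ for $u\in\mathcal R(A_\mu C_f)$.

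Your factorization paragraph is that same cancellation, but it is mistyped. Since $C_{gf}=\iota_sA_\mu C_f$ contains $\iota_s$, not $\iota_s^*$, the operator you want is $T=C_fA_\mu^*M^{-1}\iota_s^{-1}$ on $\iota_s(L^2(\mu))$. As written, $(\iota_s^*)^{-1}$ lands in $Z_{-s}^\mu$, where $M^{-1}$ is not defined.

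Nothing depends on that paragraph, because your truncation argument is self-contained:
\begin{itemize}
\item conditioning on $(g_\mu(e_1),\dots,g_\mu(e_J))$ is classical;
\item $L^2$ martingale convergence of the conditional means gives trace-norm convergence of the deterministic Gaussian conditional covariances;
\item with $M_J=P_JMP_J|_{\mathrm{ran}P_J}\ge I$, the estimate $\|M_J^{-1}P_Jx-P_JM^{-1}x\|\le\|M\|\,\|(I-P_J)M^{-1}x\|$ gives $P_J(P_JMP_J)^{-1}P_J\to M^{-1}$ strongly.
\end{itemize}

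As for what each approach buys: the paper's route is shorter and purely algebraic. However, its key step must give meaning to $C_{g_\mu}^{-1}u$ for $u$ that in general is not in the range of $C_{g_\mu}$ on $Z_{-s}^\mu$; the solution $w=RM^{-1}u$ typically lies only in $Z_{-2s}^\mu$. That step is therefore really an identity on the extended Hilbert scale, or a statement about the Hilbert--Schmidt factor $C_{f,g_\mu}C_{g_\mu}^{-1/2}$.

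Your route uses only bounded operators and finite-dimensional conditioning, and never involves $s$. It could even replace the citation for existence of the regular Gaussian conditional law. With $m_\infty=\E[f\mid g_\mu]$ the martingale limit, $f-m_\infty$ is jointly Gaussian with and uncorrelated with $g_\mu$, hence independent of it.
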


\begin{proof}
The regular conditional distribution of $f$ given $g_\mu$ is Gaussian with the covariance $C_{post}^\mu : U\to U$ given by
\begin{equation}
    \label{eq:posterior_covariance_general}
    C_{post}^\mu = C_f - C_{f, g_\mu} C_{g_\mu}^{-1}C_{g_\mu, f},
\end{equation}
where $C_{f,f'} = \E \left[f \otimes f'\right]$ stands for the cross-covariance of random variables $f$ and $f'$, and $C_g$ is the covariance operator of $g_\mu$ \cite{mandelbaum1984linear}.
While $C_g$ is not invertible in general, the product $C_{f, g_\mu} C_{g_\mu}^{-\frac 12}$ is well-defined as a Hilbert--Schmidt operator, see e.g. \cite[Lemma 4.2]{hairer2005analysis}.

It is straightforward to observe that 
\begin{equation*}
    C_{f, g_\mu} = C_f A_\mu' : Z_{-s}^\mu \to U \quad \text{and} \quad
    C_{g_\mu, f} = A_\mu C_f : U \to Z_{-s}^\mu.
\end{equation*}
Furthermore, due to the Hilbert scale construction, there exists a linear bounded bijection $R : Z_s^\mu \to Z_{-s}^\mu$ induced in general by the Riesz representation such that
\begin{equation}
    \langle a, b\rangle_{Z_{-s}^\mu \times Z_s^\mu} = \langle a, b \rangle_{Z_0^\mu}
    = \langle a, R b\rangle_{Z_{-s}^\mu},
\end{equation}
for any $a\in Z_0^\mu$ and $b\in Z_s^\mu$. It directly follows that $A_\mu' = A_\mu^* R^{-1} : Z_{-s}^\mu \to U$. 

Next, since $f$ and $\epsilon_\mu$ are independent, we have for any $a,b\in Z_{-s}^\mu$ that
\begin{eqnarray*}
    \langle C_g a, b\rangle_{Z_{-s}^\mu} & = & 
    \E \langle A_\mu u, a\rangle_{Z_{-s}^\mu} \langle A_\mu u, b\rangle_{Z_{-s}^\mu} 
    + \E \langle \epsilon_\mu, a\rangle_{Z_{-s}^\mu} \langle \epsilon_\mu, b\rangle_{Z_{-s}^\mu} \\
    & = & \langle C_f A_\mu' a, A_\mu' b\rangle_{U}
    + \E \langle \epsilon_\mu,  R^{-1} a\rangle_{Z_{-s}^\mu\times Z_s^\mu} \langle \epsilon_\mu, R^{-1} b\rangle_{Z_{-s}^\mu \times Z_s^\mu} \\
    & = & \langle A_\mu C_f A_\mu' a, b\rangle_{Z_{-s}^\mu} + \langle R^{-1} a, R^{-1} b\rangle_{Z_0^\mu} \\
    & = & \langle A_\mu C_f A_\mu' a, b\rangle_{Z_{-s}^\mu} + \langle R^{-1} a, b\rangle_{Z_{-s}^\mu}.
\end{eqnarray*}
In consequence, we have
\begin{equation}
    C_g = A_\mu C_f A_\mu' + R^{-1} : Z_{-s}^\mu \to Z_{-s}^\mu.
\end{equation}
Let us next show that the operator \eqref{eq:posterior_covariance_general} has the same action on $U$ as the operator \eqref{eq:posterior_covariance_form1}.
Clearly, in the latter operator, the inverse $\left(A_\mu C_f A_\mu^* + I\right)^{-1}$ is well-defined linear and bounded operator on $Z_0^\mu$.
To show that the actions coincide, let $u\in {\mathcal R}(A_\mu C_f) \subset Z_0^\mu$. It remains to show that 
\begin{equation*}
    R^{-1} \left[\left(A_\mu C_f A_\mu^* + I\right)R^{-1}\right]^{-1} u = \left(A_\mu C_f A_\mu^* + I\right)^{-1} u.
\end{equation*}
Notice that due to our Hilbert scale construction, $R$ is a diagonal operator in the orthonormal basis $\{e_j\}_{j\in\J}$, which extends to a bijective mapping between $Z_t^\mu$ and $Z_{t-2s}^\mu$ for any $t\in\R$. It follows that there exists a unique vector $w \in Z_{2s}^\mu$ such that
\begin{equation*}
    (A_\mu C_f A_\mu^* + I)R^{-1} w = u \in Z_0^\mu.
\end{equation*}
In consequence, we have
\begin{equation*}
    R^{-1}w = (A_\mu C_f A_\mu^* + I)^{-1} u
\end{equation*}
which proves the claim.
\end{proof}

To better characterize the self-adjoint operator $A_\mu^*A_\mu$, observe that the adjoint
$\iota_\mu^*:L^2(\dom;\mu)\to H$ satisfies
\begin{equation*}
    \langle \iota_\mu^* f, g\rangle_{H}
    = \langle f, \iota_\mu g\rangle_{L^2(\mu)}
    = \int_{\dom} f(x)\,\langle k_x,g\rangle_H\,\mu(dx)
    = \left\langle \int_{\dom} f(x)\,k_x\,\mu(dx),\, g\right\rangle_H ,
\end{equation*}
and therefore
\begin{equation}
    \label{eq:iotamuadj}
    \iota_\mu^* f = \int_{\dom} k_x f(x)\,\mu(dx).
\end{equation}
The properties of the integral operator $\iota_\mu^*$ are well known; see, for instance,
\cite[Section~4]{christmann2008support}. In particular, $\iota_\mu^*$ is bounded Hilbert--Schmidt operator with $\norm{\iota_\mu^*}_{HS} = \norm{k}_{L^2(\mu)}$ \cite[Theorem~4.27]{christmann2008support}.
To this vein, we also introduce the operator
\begin{equation*}
    B_\mu := \int_{\dom} k_x \otimes k_x \,\mu(dx) : H \to H,
\end{equation*}
where the tensor product is understood in $H$. The operator $B_\mu$ is often called the covariance
(or second-moment) operator in statistical learning theory.
It is compact, positive, self-adjoint, and trace class , and satisfies $B_\mu=\iota_\mu^*\iota_\mu$ with $\tr_H\, B_\mu = \norm{k}_{L^2(\mu)}$ \cite[Theorem~4.27]{christmann2008support}.
Consequently,
\begin{equation}
    \label{eq:normal_operator}
    A_\mu^* A_\mu
    = A^* B_\mu A
    = \int_{\dom} (A^*k_x)\otimes(A^*k_x)\,\mu(dx)
    : U \to U
\end{equation}
is a bounded self-adjoint operator inheriting the properties of $B_\mu$.

\begin{corollary}
The posterior covariance in \eqref{eq:posterior_covariance_form1} satisfies
\begin{equation}
    \label{eq:posterior_covariance_form2}
    C_{post}^\mu = C_f^{\frac 12} (I + C_f^{\frac 12} A_\mu^* A_\mu C_f^{\frac 12})^{-1}C_f^{\frac 12} : U \to U.
\end{equation}
\end{corollary}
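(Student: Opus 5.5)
Starting from the form \eqref{eq:posterior_covariance_form1} of Theorem \ref{thm:posterior_covariance1}, I will factor $C_f = C_f^{\frac12} C_f^{\frac12}$, using that $C_f$ is a positive self-adjoint trace-class operator on $U$ with a well-defined self-adjoint square root. Put $T := A_\mu C_f^{\frac12} : U \to L^2(\mu)$, a bounded operator with adjoint $T^* = C_f^{\frac12} A_\mu^*$. Then $A_\mu C_f A_\mu^* = T T^*$, $C_f A_\mu^* = C_f^{\frac12} T^*$ and $A_\mu C_f = T C_f^{\frac12}$, so \eqref{eq:posterior_covariance_form1} reads
\begin{equation*}
C_{post}^\mu = C_f^{\frac12}\left( I - T^*(TT^* + I)^{-1} T\right) C_f^{\frac12}.
\end{equation*}
It therefore suffices to prove the operator identity $I - T^*(TT^*+I)^{-1}T = (I + T^*T)^{-1}$ on $U$. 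Since $T^*T = C_f^{\frac12}A_\mu^*A_\mu C_f^{\frac12}$, this gives exactly \eqref{eq:posterior_covariance_form2}.

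Both inverses exist as bounded operators. Indeed, $TT^*$ and $T^*T$ are bounded, self-adjoint and nonnegative on $L^2(\mu)$ and $U$ respectively, so $TT^*+I \ge I$ and $T^*T + I \ge I$ are boundedly invertible with inverses of norm at most one. Boundedness of $T$ follows from boundedness of $A$, of $\iota_\mu$ (a Hilbert--Schmidt operator by the discussion following \eqref{eq:iotamuadj}), and of $C_f^{\frac12}$.

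To verify the identity, I will use the push-through relation $T(I+T^*T) = (I+TT^*)T$, which is immediate by expanding both sides. Multiplying on the left by $(I+TT^*)^{-1}$ and on the right by $(I+T^*T)^{-1}$ gives $(I+TT^*)^{-1}T = T(I+T^*T)^{-1}$. Hence
\begin{equation*}
I - T^*(TT^*+I)^{-1}T = I - T^*T(I+T^*T)^{-1} = \left((I+T^*T) - T^*T\right)(I+T^*T)^{-1} = (I+T^*T)^{-1}.
\end{equation*}
This is the Woodbury/Sherman--Morrison identity in operator form.

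The argument is essentially routine. The only point needing care is bookkeeping of domains and adjoints: $A_\mu^*$ must be the $L^2(\mu)$-adjoint, not the $Z_{-s}^\mu$-adjoint $A_\mu'$, and all identities hold on all of $U$ rather than only on a range. Theorem \ref{thm:posterior_covariance1} already expresses the covariance through $A_\mu^*$ and bounded inverses on $Z_0^\mu = L^2(\mu)$, so this causes no real difficulty.
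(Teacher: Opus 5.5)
Your proof is correct and matches the paper's argument: the paper likewise sets $T_\mu = A_\mu C_f^{1/2}$ and invokes $(I+T_\mu^*T_\mu)^{-1} = I - T_\mu^*(T_\mu T_\mu^*+I)^{-1}T_\mu$, an identity it only states and which you also verify via the push-through relation. Your remarks on bounded invertibility and on using the $L^2(\mu)$-adjoint $A_\mu^*$ rather than $A_\mu'$ are accurate and supply details the paper leaves implicit.
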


\begin{proof}
The operator defined by \eqref{eq:posterior_covariance_form2} is a linear bounded, self-adjoint and trace-class operator in $U$, inheriting these properties from the prior covariance $C_f$ and boundedness of $A_\mu$. To see that the two operators coincide, abbreviate $T_\mu = A_\mu C_f^{\frac 12} : U\to L^2(\mu)$ and observe
\begin{equation*}
    (I + T_\mu^* T_\mu)^{-1} = I - T_\mu^* (T_\mu T_\mu^* + I)^{-1} T_\mu.
\end{equation*}
This proves the claim.
% Let $\{u_j\}_{j=1}^\infty$ be an orthonormal basis of $U$. We have that the Hilbert--Schmidt norm of $C_f^{\frac 12} A_\mu^* A_\mu C_f^{\frac 12}$ on $U$ is bounded, i.e.,
% \begin{eqnarray*}
%     \norm{C_f^{\frac 12} A_\mu^* A_\mu C_f^{\frac 12}}_{\text{HS}(U)}^2
%     & = & \sum_{j=1}^\infty \norm{C_f^{\frac 12} A_\mu^* A_\mu C_f^{\frac 12} u_j}^2 \\
%     & = & \sum_{j=1}^\infty \langle A_\mu^* A_\mu C_f^{\frac 12} u_j, C_f A_\mu^* A_\mu C_f^{\frac 12} u_j\rangle_U \\
%     & \leq & \norm{A_\mu^* A_\mu}_{{\mathcal L}(U)}^2 \norm{C_f}_{{\mathcal L}(U)} \tr_U(C_f) < \infty.
% \end{eqnarray*}
% It follows by \cite[Lemma 3.3]{pinski2015kullback} that the operator $A_\mu^* A_\mu + C_f^{-1}$ can be associated with a quadratic form, which is closed and bounded from below on the Cameron--Martin space of $f$. Furthermore, $(A_\mu^* A_\mu + C_f^{-1})^{-1}$ corresponds to a covariance operator of a Gaussian random variable on $U$.

% It remains to show that the actions coincide.
% Indeed, interpreting $C_g = A_\mu C_f A_\mu^* + I : L^2(\mu) \to L^2(\mu)$, we have
% \begin{eqnarray*}
%     (A_\mu^* A_\mu + C_f^{-1}) C_{post}^\mu
%     & = & A_\mu^* A_\mu  C_f - A_\mu^* A_\mu  C_f A_\mu^* C_g^{-1} A_\mu C_f  + I - A_\mu^* C_g^{-1} A_\mu C_f \\
%     & = & I + A_\mu^*\left[I - A_\mu C_f A_\mu^* C_g^{-1} - C_g^{-1}\right] A_\mu C_f \\
%     & = & I + A_\mu^*\left[I - \left( A_\mu C_f A_\mu^* + I\right) C_g^{-1} + C_g^{-1} - C_g^{-1}\right] A_\mu C_f \\
%     & = & I.
% \end{eqnarray*}
% Reversing the argument yields the claim.
\end{proof}

\begin{example}[An empirical measure]
\label{ex:empirical_measure}
Consider the inference task with design induced by the empirical measure $\mu_B = \frac 1B \sum_{j=1}^B \delta_{x_j} \in {\mathcal P}(\dom)$. It follows that
\begin{equation*}
    C_{post}^{\mu_B} = C_f^{\frac 12}\left(\frac 1B \sum_{j=1}^B (C_f^{\frac 12}A^* k_{x_j}) \otimes (C_f^{\frac 12}A^* k_{x_j}) + I\right)^{-1} C_f^{\frac 12}.
\end{equation*}
This expression coincides with the posterior covariance obtained from the observation model
\begin{equation*}
    g_j = (Af)(x_j) + \sqrt{B} \epsilon_j, \quad, j=1,\ldots, B,
\end{equation*}
where $\epsilon_j\sim {\mathcal N}(0,1)$ i.i.d. Thus, the empirical measure $\mu_B$ corresponds to a design in which independent observations are collected at the sensor locations $x_j$, but each measurement is corrupted by noise whose standard deviation grows as $\sqrt{B}$. In effect, spreading the sampling budget across $B$ locations leads to a higher noise-to-signal ratio at each individual observation. Similarly, starting with $\tilde \mu_B = \sum_{j=1}^B \delta_{x_j}$ leads to independent observations at same locations but with fixed noise-to-signal ratio.
\end{example}

\section{Bayesian OED in the relaxed setting}
\label{sec:BOED}

\subsection{A-optimality criterion}
\label{subsec:Aopt}

Bayesian optimal experimental design seeks to maximize a utility functional over the design space. In this work, the design space is the set of positive measures $\mu$ satisfying $\mu(\dom)=B$, which we henceforth denote by $\Pb$ with the convention that ${\mathcal P}(\dom) = {\mathcal P}^1(\dom)$.
The optimal design is formalized as the maximizer
\begin{align*}
    \mu^* = \argmax_{\mu \in \Pb}\, U(\mu) = \argmax_{\mu \in \Pb}\, \E\, u(f,g^\mu; \mu),
\end{align*}
where the expectation is taken over the joint Bayesian distribution of $f$ and $g^\mu$ constructed from the Gaussian prior on $f$ and the likelihood in \eqref{eq:relaxed_inference}.
In this paper, we focus on the A-optimality criterion that emerges from the choice
\begin{equation}
    u(f,g^\mu; \mu) = - \norm{f - \hat f(g^\mu; \mu)}^2,
\end{equation}
where $\hat f(g^\mu; \mu)$ stands for the mean of the posterior distribution. This leads to the well-known formula for the expected utility, namely, $U$ satisfies
\begin{eqnarray}
    \label{eq:Umu}
    U(\mu) = - \tr_U\, \left[C_{post}(\mu)\right] & = & - \tr_U\, \left[C_f^{\frac 12} (I + C_f^{\frac 12} A_\mu^* A_\mu C_f^{\frac 12})^{-1}C_f^{\frac 12}\right] \nonumber\\
    & = & - \tr_{C_f^{1/2}(U)}\, \left(\int_\dom (C_f^{\frac 12}A^* k_x) \otimes (C_f^{\frac 12} A^* k_x) \mu(dx) + I\right)^{-1}
\end{eqnarray}
by identity \eqref{eq:normal_operator}. In what follows, we write $U_B(\mu)=U(\mu)$ to emphasize the dependence on the domain $\Pb$.
Motivated by the expression in \eqref{eq:Umu}, let us abbreviate
\begin{equation}
    \label{eq:Smu}
    S_\mu = \int_\dom (C_f^{\frac 12}A^* k_x) \otimes (C_f^{\frac 12} A^* k_x) \mu(dx).
\end{equation}
The operator $S_\mu$ can be interpreted as the covariance $B_\mu$ preconditioned by $AC_f^{\frac 12}$.

\begin{proposition}\label{prop:concavity}
The mapping $U_B : \Pb\to \R$ defined by \eqref{eq:Umu} is concave.
\end{proposition}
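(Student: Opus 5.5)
The plan is to split the map $U_B$ into two pieces. The inner map $\mu\mapsto S_\mu$ is affine. The outer map $S\mapsto -\tr\,(I+S)^{-1}$ is concave on positive self-adjoint trace-class operators. Composing a concave function with an affine map gives a concave function.

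First, $\Pb$ is convex: for $\mu,\nu\in\Pb$ and $t\in[0,1]$, the measure $\mu_t=t\mu+(1-t)\nu$ is positive with $\mu_t(\dom)=B$. By \eqref{eq:Smu}, linearity of the integral gives
\begin{equation*}
S_{\mu_t}=tS_\mu+(1-t)S_\nu .
\end{equation*}
Each $S_\mu$ is positive, self-adjoint and trace class, inheriting these properties from $B_\mu$ via \eqref{eq:normal_operator}, since $S_\mu=C_f^{1/2}A^*B_\mu AC_f^{1/2}$. So it suffices to show that $\phi(S)=\tr_U\big[C_f^{1/2}(I+S)^{-1}C_f^{1/2}\big]$ is convex on the cone of positive bounded self-adjoint operators. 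I prefer to keep the $C_f^{1/2}$ sandwich over the trace on $C_f^{1/2}(U)$, because $C_f$ is trace class, so every trace is finite.

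For convexity of $\phi$, I would use operator convexity of inversion on positive invertible operators. The claim is that for $X=I+S_\mu$ and $Y=I+S_\nu$, both bounded below by $I$,
\begin{equation*}
(tX+(1-t)Y)^{-1}\preceq tX^{-1}+(1-t)Y^{-1}
\end{equation*}
in the Loewner order. One proof uses the Schur complement: the block operator $\begin{pmatrix}X & I\\ I & X^{-1}\end{pmatrix}$ is positive semidefinite. Convex combinations of such blocks are again positive semidefinite, and the Schur complement then yields the inequality. An alternative reduces to $Y=I$ by congruence with $X^{-1/2}$ and applies the scalar inequality $(t\lambda+1-t)^{-1}\le t\lambda^{-1}+1-t$ via the spectral theorem.

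Finally, conjugation preserves the Loewner order, so $P\preceq Q$ implies $C_f^{1/2}PC_f^{1/2}\preceq C_f^{1/2}QC_f^{1/2}$. For positive trace-class operators, $\tr$ is monotone. Together these give
\begin{equation*}
\phi(S_{\mu_t})\le t\phi(S_\mu)+(1-t)\phi(S_\nu),
\end{equation*}
that is, $U_B(\mu_t)\ge tU_B(\mu)+(1-t)U_B(\nu)$.

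The main obstacle is making the operator-convexity step rigorous in infinite dimensions. The Schur-complement or congruence argument must be justified for bounded, non-compact perturbations of the identity, and the trace must be well defined. Both are handled by the uniform lower bound $I+S\succeq I$, which makes all inverses bounded with norm at most $1$, together with the trace-class sandwich by $C_f^{1/2}$.
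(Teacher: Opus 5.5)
Your proposal is correct and follows essentially the same route as the paper: affinity of $\mu\mapsto S_\mu$, operator convexity of inversion in the Loewner order, preservation of that order under the $C_f^{1/2}$ sandwich, and monotonicity of the trace on positive trace-class operators. Beyond the paper, you prove the operator-convexity step, which the paper only invokes, and you work with the explicit form $\tr_U\bigl[C_f^{1/2}(I+S_\mu)^{-1}C_f^{1/2}\bigr]$, which is cleaner than the paper's weighted-trace notation. One small labeling slip: congruence with $X^{-1/2}$ reduces $X$, not $Y$, to the identity, so use $Y^{-1/2}$ or swap the roles.
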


\begin{proof}
Fix $\mu_0,\mu_1\in\Pb$ and $t\in[0,1]$, and set
\[
\mu_t:=(1-t)\mu_0+t\mu_1,\qquad
M_t:=S_{\mu_t}+I.
\]
By linearity it follows that $M_t=(1-t)M_0+tM_1$.
Each $M_i$ is bounded, self-adjoint, and strictly positive, hence invertible, and the same holds for $M_t$.

We now use the operator convexity (in terms of the Loewner order) of the inverse on the cone of strictly positive bounded self-adjoint operators, i.e., it holds that
\begin{equation}\label{eq:operator-convex-inverse}
\bigl((1-t)M_0+tM_1\bigr)^{-1}\ \leq (1-t)M_0^{-1}+tM_1^{-1}
\end{equation}
for $0\le t\le 1$.
Since the trace is monotone with respect to the Loewner order on positive trace-class
operators \cite[Thm. VI.18 (d)]{ReedSimon1}, applying $\tr_{C_f^{1/2}(U)}(\cdot)$ to \eqref{eq:operator-convex-inverse} yields
\[
\tr_{C_f^{1/2}(U)}(M_t^{-1})
\ \le\ (1-t)\tr_{C_f^{1/2}(U)}(M_0^{-1})+t\tr_{C_f^{1/2}(U)}(M_1^{-1}).
\]
We note that the weight $C_f^{1/2}$ on the trace does not affect the claim as it preserves the Loewner order.
This yields the claim after applying the argument to the negative trace.
\end{proof}

% \textcolor{red}{Are we switching to finite dimensional problems from this section onwards?   If not, we need to adjust things.   Both settings (finite dim / infinite dim) will require different assumptions on the terms.}
 \subsection{Differentiability of the expected utility}

Let us now consider the differentiability of the expected utility $U_B(\mu)$ with respect to the argument $\mu$. Notice that $U_B$ is defined on $\Pb$, which differs slightly from much of the existing literature—particularly on gradient flows—where functionals are typically posed on the space of probability measures ${\mathcal P}(\dom)$. For consistency, we introduce a functional $V_B : {\mathcal P}(\dom) \to \R$ defined by
\begin{equation}
    V_B(\mu) = U_B(B \mu) = - \tr_{C^{1/2}(U)}\left(B \cdot S_\mu + I\right)^{-1}.
\end{equation}
Clearly, optimizing $U_B$ and $V_B$ are equivalent up to the multiplicative constant $\mu\mapsto B\mu$.

Recall that given a functional $F: {\mathcal P}(\dom) \to \R$, we denote by $\frac{\delta F(\mu)}{\delta \mu} : \dom \to \R$, if it exists, the unique (up to additive constants) function such that
\begin{equation}
    \label{eq:first_variation}
    \frac{d}{dt} F(\mu + t \delta \mu)|_{t=0} = \int_\dom \frac{\delta F(\mu)}{\delta \mu}  (x)\delta \mu(dx)
\end{equation}
for every signed measure $\delta \mu$ such that, for some $t_0>0$ and $t\in [0,t_0]$, the measure $\mu + t \delta \mu \in {\mathcal P}(\dom)$. The function $\frac{\delta F}{\delta \mu} (\mu)$ is called \emph{first variation} of the functional $F$ at $\mu$ \cite{gradFlowsOverview}. In order to guarantee differentiability below, we make the following regularity assumption regarding the kernel $k$.

\begin{assumption}
\label{ass:regularity_of_kx}
The feature map $x\mapsto k_x$ belongs to $C^2(\overline \dom, H)$ and
\begin{equation}
    \sup_{x\in\dom} \norm{k_x}_H + \sup_{x\in\dom} \norm{\partial_i k_x}_H + \sup_{x\in\dom} \norm{\partial_i \partial_j k_x}_H < \infty
\end{equation}
for any $i,j = 1,...,n$.
\end{assumption}

\begin{proposition}\label{prop:frechet_derivative}
Suppose Assumption \ref{ass:regularity_of_kx} holds.
The first variation of $V_B$ at $\mu$ satisfies
\begin{equation*}
    \frac{\delta V_B}{\delta\mu}(\mu)(x)
    = B \norm{C_f^{\frac 12} (B S_\mu+I)^{-1} C_f^{\frac 12}A^* k_x}_U^2
\end{equation*}
for all $x\in \dom$.
\end{proposition}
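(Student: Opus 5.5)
The plan is a direct computation of the Gateaux derivative of $t\mapsto V_B(\mu+t\delta\mu)$, using the linearity of $\mu\mapsto S_\mu$ and the standard derivative of the operator inverse.

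First I will fix notation. Set $\phi_x := C_f^{\frac 12}A^*k_x\in U$. By Assumption~\ref{ass:regularity_of_kx} and boundedness of $A$ and $C_f^{1/2}$, we have $\sup_x\norm{\phi_x}_U<\infty$. It follows that $S_{\delta\mu}=\int_\dom \phi_x\otimes\phi_x\,\delta\mu(dx)$ is a well-defined trace-class operator for any finite signed measure $\delta\mu$, with $\tr\, |S_{\delta\mu}|\le \sup_x\norm{\phi_x}^2\,|\delta\mu|(\dom)$. By linearity, $M(t):=B S_{\mu+t\delta\mu}+I = M_0 + tB S_{\delta\mu}$, where $M_0=BS_\mu+I\ge I$. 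For small $|t|$ the operator $M(t)$ is invertible with $\norm{M(t)^{-1}}\le 2$, and we have the Neumann-type resolvent identity
\begin{equation*}
M(t)^{-1}-M_0^{-1} = -t\,M_0^{-1} B S_{\delta\mu} M(t)^{-1}.
\end{equation*}

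Next I will insert the trace weight. Reading $\tr_{C_f^{1/2}(U)}(\cdot)$, as in \eqref{eq:Umu}, as $\tr_U\bigl(C_f^{\frac12}(\cdot)C_f^{\frac12}\bigr)$, the identity gives
\begin{equation*}
\frac{V_B(\mu+t\delta\mu)-V_B(\mu)}{t} = B\,\tr_U\bigl(C_f^{\frac12}M_0^{-1}S_{\delta\mu}M(t)^{-1}C_f^{\frac12}\bigr).
\end{equation*}
To pass to the limit $t\to0$, I use that $S_{\delta\mu}$ is trace class and that $M(t)^{-1}\to M_0^{-1}$ in operator norm, since $\norm{M(t)^{-1}-M_0^{-1}}=O(t)$ by the same identity. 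The trace of (bounded)$\times$(trace class)$\times$(bounded) is continuous under operator-norm convergence of the bounded factors. The limit is therefore $B\,\tr_U(C_f^{\frac12}M_0^{-1}S_{\delta\mu}M_0^{-1}C_f^{\frac12})$.

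Finally I identify the integrand. Exchange trace and integral: this is justified because the integral defining $S_{\delta\mu}$ is a Bochner integral in the trace-class norm (the integrand is bounded and measurable, even continuous), and the map $T\mapsto \tr(PTQ)$ is continuous linear on trace-class operators. Then use $\tr(P(\phi\otimes\phi)Q)=\langle Q^*\phi... \rangle$ in the form $\tr\bigl(C_f^{\frac12}M_0^{-1}(\phi_x\otimes\phi_x)M_0^{-1}C_f^{\frac12}\bigr)=\norm{C_f^{\frac12}M_0^{-1}\phi_x}_U^2$, which relies on the self-adjointness of $M_0^{-1}$ and $C_f^{1/2}$. This yields
\begin{equation*}
\frac{d}{dt}V_B(\mu+t\delta\mu)\Big|_{t=0}=\int_\dom B\norm{C_f^{\frac12}(BS_\mu+I)^{-1}C_f^{\frac12}A^*k_x}_U^2\,\delta\mu(dx),
\end{equation*}
which is the claimed first variation in the sense of \eqref{eq:first_variation}. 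Uniqueness holds up to constants, and continuity in $x$ follows from Assumption~\ref{ass:regularity_of_kx}.

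The main obstacle is the rigorous justification of the limit and of the trace–integral interchange in infinite dimensions. The delicate point is that $C_f^{1/2}$ weights the trace, so I must check that every product appearing is genuinely trace class. My first attempt is to note that $S_{\delta\mu}$ itself is trace class and all other factors are bounded, which avoids needing any trace-class property of $C_f$ at this step.
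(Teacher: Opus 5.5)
Your proposal is correct and follows essentially the paper's route: linearity of $\mu\mapsto S_\mu$, a resolvent identity for $(BS_{\mu+t\delta\mu}+I)^{-1}$, cyclicity of the trace, and the rank-one trace identity giving $B\norm{C_f^{\frac12}(BS_\mu+I)^{-1}C_f^{\frac12}A^*k_x}_U^2$. Your exact first-order identity combined with operator-norm continuity of $M(t)^{-1}$ replaces the paper's second-order expansion with its $O(t^2)$ remainder, and your trace-norm and Bochner-integral arguments make explicit the limit passage and the trace--integral interchange that the paper leaves implicit. One small caveat on your closing remark: writing $V_B(\mu+t\delta\mu)-V_B(\mu)$ as a single trace does require each term to be finite, i.e.\ that $C_f$ is trace class, which holds because it is a Gaussian prior covariance.
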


\begin{proof}
Let $\mu$ be fixed and $\delta\mu$ a signed perturbation with $t_0>0$ such that $\mu+t\delta\mu \in {\mathcal P}(\dom)$ for $t\in [0,t_0]$.  
We first observe that $S_{\delta\mu}$ is well-defined by \eqref{eq:Smu} as a linear, self-adjoint and trace-class operator since due to linearity and since $\mu, \mu+t\delta\mu \in {\mathcal P}(\dom)$ one can set 
\begin{equation*}
    S_{\delta \mu} = \frac{S_{\mu + t_0\delta \mu} - S_{\mu}}{t_0}.
\end{equation*}
Now it holds that 
\begin{align*}
V_B(\mu+t\delta\mu)-V_B(\mu)
&= -\tr_{C_f^{1/2}(U)}\,\left[(B(S_\mu+t S_{\delta\mu})+I)^{-1}
             - (BS_\mu + I)^{-1}\right] \\
&= tB\,\tr_{C_f^{1/2}(U)}\!\left[(BS_\mu+I)^{-1} S_{\delta\mu}
        (BS_\mu+I)^{-1}\right] + \mathcal{E}(t),
\end{align*}
using the standard identity
\[
(B(S_\mu+t S_{\delta\mu})+I)^{-1} = (BS_\mu+I)^{-1}-t B (BS_\mu+I)^{-1} S_{\delta\mu}
        (BS_\mu+I)^{-1}+\mathcal{E}(t),
\]
where 
\[
\mathcal{E}(t)
= t^2 B^2 (BS_\mu+I)^{-1} S_{\delta\mu}
        (B(S_\mu+t S_{\delta\mu})+I)^{-1}
        S_{\delta\mu} (BS_\mu+I)^{-1}.
\]
and, consequently, $\norm{\mathcal{E}(t)} = {\mathcal O}(t^2)$.
Using the cyclic property of the trace and the identity $\tr(a\otimes b) = \langle a, b\rangle$ yields
\begin{align*}
\frac{d}{dt} V_B(\mu + t \delta \mu)\big|_{t=0} 
 &= B \tr_{C_f^{1/2}(U)}\!\left[(BS_\mu+I)^{-1} S_{\delta\mu}
        (BS_\mu+I)^{-1}\right] \\
 &= B \int_\dom 
    \tr_{U}\!\left[(BS_\mu+I)^{-1} C_f (BS_\mu+I)^{-1} (C_f^{\frac 12}A^* k_x) \otimes (C_f^{\frac 12} A^* k_x)
        \right] \,\delta\mu(dx) \\
 &= B \int_\dom \langle (B S_\mu+I)^{-1} C_f (B S_\mu+I)^{-1} C_f^{\frac 12}A^* k_x ,\, C_f^{\frac 12} A^* k_x\rangle \,\delta\mu(dx) \\
 &= B \int_\dom \norm{C_f^{\frac 12} (B S_\mu+I)^{-1} C_f^{\frac 12}A^* k_x}_U^2  \,\delta\mu(dx)
\end{align*}
which proves the claim.
\end{proof}

\begin{proposition}
    \label{prop:gradient_of_U}
    Suppose Assumption \ref{ass:regularity_of_kx} holds.
    Then the first variation $\frac{\delta V_B}{\delta \mu}$ is continuously differentiable and its gradient is given componentwise by
    \begin{equation*}
        \left[\partial_i \frac{\delta V_B}{\delta \mu}(\mu)\right](x) = 2 B \left\langle C_f^{\frac 12} (B S_\mu+I)^{-1} C_f^{\frac 12}A^* k_x, C_f^{\frac 12} (B S_\mu+I)^{-1} C_f^{\frac 12}A^* (\partial_i k_x)\right\rangle_U. 
    \end{equation*}
\end{proposition}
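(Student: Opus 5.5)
The plan is to start from the closed form of the first variation in Proposition \ref{prop:frechet_derivative},
\[
\frac{\delta V_B}{\delta\mu}(\mu)(x) = B\norm{L_\mu k_x}_U^2, \qquad L_\mu := C_f^{\frac 12}(BS_\mu+I)^{-1}C_f^{\frac 12}A^* : H\to U,
\]
and differentiate it in $x$ by the chain rule. For fixed $\mu$, the operator $L_\mu$ does not depend on $x$, and it is linear and bounded. Boundedness holds because $A^*:H\to U$ and $C_f^{\frac 12}$ are bounded, and $(BS_\mu+I)^{-1}$ has norm at most one since $S_\mu$ is positive self-adjoint. The map $x\mapsto \frac{\delta V_B}{\delta\mu}(\mu)(x)$ is therefore the composition of three maps:
\begin{itemize}
\item the feature map $x\mapsto k_x\in H$, which is $C^2$ by Assumption \ref{ass:regularity_of_kx};
\item the bounded linear map $L_\mu$;
\item the smooth quadratic functional $v\mapsto \norm{v}_U^2$, whose derivative at $v$ in direction $w$ is $2\langle v,w\rangle_U$.
\end{itemize}

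The key steps, in order, are as follows.

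First, a bounded linear operator commutes with Fréchet differentiation. Hence $x\mapsto L_\mu k_x$ lies in $C^2(\overline\dom,U)$ with $\partial_i(L_\mu k_x)=L_\mu(\partial_i k_x)$. To justify this, I apply $L_\mu$ to the difference quotient $(k_{x+he_i}-k_x)/h$ and use $\norm{L_\mu(\cdot)}_U\le\norm{L_\mu}\,\norm{\cdot}_H$.

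Second, the chain rule for $x\mapsto\norm{L_\mu k_x}_U^2$ gives
\[
\partial_i \frac{\delta V_B}{\delta\mu}(\mu)(x) = 2B\langle L_\mu k_x, L_\mu\partial_i k_x\rangle_U,
\]
which is exactly the claimed formula once $L_\mu$ is written out.

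Third, continuity of this gradient in $x$ follows from the continuity of $x\mapsto k_x$ and $x\mapsto\partial_i k_x$ in $H$ together with the continuity of the inner product. For an explicit estimate I would write
\[
\bigl|\langle L_\mu k_x,L_\mu\partial_ik_x\rangle-\langle L_\mu k_y,L_\mu\partial_ik_y\rangle\bigr| \le \norm{L_\mu}^2\Bigl(\norm{k_x-k_y}_H\sup_z\norm{\partial_ik_z}_H+\sup_z\norm{k_z}_H\,\norm{\partial_ik_x-\partial_ik_y}_H\Bigr),
\]
where the suprema are finite by Assumption \ref{ass:regularity_of_kx}. This establishes that the first variation is $C^1$, and in fact uniformly so on $\overline\dom$. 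The $C^2$ part of the assumption is not needed for this statement, only the $C^1$ part.

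I expect no serious obstacle. The only point needing care is the well-definedness of the first variation as a genuine function of $x$. Concretely, $\norm{L_\mu k_x}_U^2$ must be finite for every $x$, not only $\mu$-almost everywhere. This follows from the uniform bound on $\norm{k_x}_H$ and the boundedness of $L_\mu$.

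I would also remark that the additive constant in the first variation is irrelevant here, since it disappears after differentiation.
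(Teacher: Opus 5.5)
Your proposal is correct and follows the paper's own argument. In both, the boundedness of $C_f^{1/2}(BS_\mu+I)^{-1}C_f^{1/2}A^*:H\to U$ lets the derivative pass onto $\partial_i k_x$, and the chain rule for the squared norm then yields the formula. Your explicit continuity estimate just writes out a detail the paper leaves implicit.
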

\begin{proof}
Since $C_f^{\frac 12} (B S_\mu+I)^{-1} C_f^{\frac 12}A^* : H \to U$ is bounded, our assumption implies that the composition 
\begin{equation*}
    x\mapsto C_f^{\frac 12} (B S_\mu+I)^{-1} C_f^{\frac 12}A^* k_x : \dom \to U
\end{equation*}
is continuously as a map into $U$ and
\begin{equation}
    \partial_i \left[C_f^{\frac 12} (B S_\mu+I)^{-1} C_f^{\frac 12}A^* k_x\right] =
    C_f^{\frac 12}(B S_\mu+I)^{-1} C_f^{\frac 12}A^* (\partial_i k_x).
\end{equation}
Now the result is given by the chain rule.
\end{proof}

\section{Optimization by Wasserstein gradient flow}
\label{sec:Wasserstein_gradient_flow}
 
\subsection{Preliminaries}

We briefly recall the Wasserstein--$2$ geometry on probability measures and the associated notion of gradient flow. Detailed expositions can be found in \cite{gradFlowsOverview,gradientflow_book,villani2021topics}.  Throughout, let $\dom\subset \mathbb{R}^d$ be a closed domain endowed with the Euclidean distance. We denote by $\mathcal{P}_2(\dom) \subset \mathcal{P}(\dom)$ the subset of 
probability measures with finite second moment, i.e., 
\begin{equation*}
\mathcal{P}_2(\dom)
=
\Big\{
\mu\in\mathcal{P}(\dom):
\int_{\dom} |x|^2\,\mu(dx) < \infty
\Big\}.
\end{equation*}

\begin{definition}[Wasserstein $2$--distance]
For $\mu,\nu\in\mathcal{P}_2(\dom)$, the Wasserstein--$2$ distance is defined as
\begin{equation*}
W_2(\mu,\nu)
=
\Bigg(
\inf_{\gamma\in\Gamma(\mu,\nu)}
\int_{\dom\times\dom}
|x-y|^2\,\gamma(dx,dy)
\Bigg)^{1/2},
\end{equation*}
where $\Gamma(\mu,\nu)$ denotes the set of transport plans between $\mu$ and $\nu$, i.e.
\begin{align*}
\Gamma(\mu,\nu)
=
\Big\{
\gamma\in\mathcal{P}(\dom\times\dom):
&\int_{\dom\times\dom}\phi(x)\,\gamma(dx,dy)
=
\int_{\dom}\phi(x)\,\mu(dx),\\
&\int_{\dom\times\dom}\varphi(y)\,\gamma(dx,dy)
=
\int_{\dom}\varphi(y)\,\nu(dy),\\
&\text{for all }\phi,\varphi\in C_b(\dom)
\Big\}.
\end{align*}
\end{definition}
The metric space $(\mathcal{P}_2(\dom),W_2)$ possesses a formal Riemannian structure in which tangent vectors at $\mu$ are identified with velocity fields $v\in L^2(\mu;\mathbb{R}^d)$ through the continuity equation
\begin{equation*}
\partial_t \mu_t + \nabla_x\cdot(\mu_t v_t)=0 .
\end{equation*}
Now let $F : \mathcal{P}_2(\dom) \to \R$ be a functional and recall the definition of the first variation in \eqref{eq:first_variation}.

\begin{definition}{(Wasserstein gradient)} 
Assume that the first variation $\frac{\delta F(\mu)}{\delta \mu} :\dom \to \R$ is continuously differentiable.
The Wasserstein gradient of $F$ at $\mu$ is 
\begin{equation*}
\nabla_{W_2} F(\mu) =  \nabla_x\frac{\delta F(\mu)}{\delta \mu}.
\end{equation*}
\end{definition}
The associated Wasserstein gradient \emph{ascending} flow is characterized by the evolution equation
% \begin{equation*}
% \partial_t \mu_t = -\nabla_{W_2}F(\mu_t) = \nabla_x\cdot \left(\mu_t\nabla_x\frac{\delta F (\mu_t)}{\delta \mu_t}\right),
% \end{equation*}
% which formally decreases $F$ along trajectories, i.e., $\frac{d}{dt} F(\mu_t) \le 0$.  Conversely, as our objective is to \emph{maximize} a functional, one considers the Wasserstein gradient \emph{ascent} flow
\begin{equation}
\label{eq:ascent}
\partial_t \mu_t = -\nabla\cdot\left(\mu_t \nabla_{W_2}(F)(\mu_t)\right) = -\nabla_x\cdot\left(\mu_t\nabla_x\frac{\delta F(\mu_t)}{\delta \mu_t}\right).
\end{equation}
This evolution equation can be interpreted as transporting mass along the velocity field
\begin{equation*}
    v_t(x) = \nabla_x\frac{\delta F(\mu_t)}{\delta \mu_t}(x).
\end{equation*}
The corresponding characteristic curves, i.e., the trajectories followed by individual particles under this transport, satisfy the ordinary differential equation
 \begin{equation} \label{grad_flow_eq}
     \frac{d x(t)}{dt} = v_t(x(t)) = \nabla_{x}\frac{\delta F(\mu_t)}{\delta \mu_t}(x(t)).
 \end{equation}
For numerical computations, the probability measure $\mu_t$ is unknown and is approximated by an empirical distribution of finitely many interacting particles
$\mu_t^N = \frac{1}{N}\sum_{i=1}^N \delta_{x_i(t)}$,
where the particle positions $x_i(t)$ evolve according to the characteristic dynamics above; see e.g. \cite{2ndBatchOptArticle}.  In particular, for the optimization problem
\begin{equation*}
\mu^* = \arg\max_{\mu\in\mathcal{P}_2(\dom)} F(\mu),
\end{equation*}
the measure $\mu_t$ is evolved along the above ascent flow, which transports mass in the direction of increasing first variation $\frac{\delta F}{\delta\mu}$ and converges, under suitable concavity and regularity assumptions, to maximizers of $F$.   

The following theorem establishes that the Wasserstein derivative of $V_B$ exists.

\begin{theorem}[Sufficient conditions for Wasserstein differentiability]\label{thm:wass_diff}
Let $\dom\subset\mathbb R^d$ be a bounded, connected domain with $C^1$ boundary.
Assume the setting of Section~\ref{sec:relaxing_BIP} and, in particular, that $V_B:\mathcal P_2(\dom)\to\mathbb R$ admits the
first variation $\phi_\mu=\frac{\delta V_B(\mu)}{\delta\mu}$ given in Proposition~\ref{prop:frechet_derivative}.
Moreover, suppose Assumption~\ref{ass:regularity_of_kx} holds.
%and, in particular, that $V_B:\mathcal P_2(X)\to\mathbb R$ admits the
%first variation $\phi_\mu=\delta U/\delta\mu$ given in Proposition~3.2 and that
%$x\mapsto \phi_\mu(x)$ is $C^1$ with gradient given by Proposition~3.3.  Assume moreover that:
% \begin{enumerate}
% \item[(A1)] (\emph{Bounded operators}) $A:U\to H$ is bounded and $C_f^{1/2}:U\to U$ is bounded.
% \item[(A2)] (\emph{Kernel regularity}) The feature map $x\mapsto k_x\in H$ is $C^2$ on $\overline X$ and
% \[
% \sup_{x\in X}\|k_x\|_H+\sup_{x\in X}\|\nabla k_x\|_H+\sup_{x\in X}\|\nabla^2 k_x\|_H < \infty.
% \]
% \end{enumerate}
% Then:
% \begin{enumerate}
% \item[(i)] For every $\mu\in\mathcal P_2(\dom)$, the Wasserstein gradient of $V_B$ exists and is given by
% $\nabla\phi_\mu\in L^2(\mu;\mathbb R^n)$.
% \item[(ii)] 
Let $(\mu_t)_{t\in[0,T]}$ be any $W_2$-absolutely continuous curve with
velocity field $v_t\in L^2(\mu_t)$ solving the continuity equation with no-flux boundary condition,
\[
\partial_t\mu_t+\nabla\cdot(\mu_t v_t)=0 \quad \text{in }\dom,
\qquad (\mu_t v_t)\cdot n=0 \quad \text{on }\partial \dom.
\]
Then $t\mapsto V_B(\mu_t)$ is absolutely continuous and, for a.e.\ $t\in[0,T]$,
\[
\frac{d}{dt}V_B(\mu_t)=\int_\dom \nabla\phi_{\mu_t}(x)\cdot v_t(x)\,\mu_t(dx).
\]
% \end{enumerate}
\end{theorem}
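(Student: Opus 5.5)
The plan is to establish a chain rule along $W_2$-absolutely continuous curves. Set $\phi_\mu(x)=B\|G_\mu k_x\|_U^2$ with $G_\mu := C_f^{1/2}(BS_\mu+I)^{-1}C_f^{1/2}A^*:H\to U$. The argument combines the first variation from Proposition~\ref{prop:frechet_derivative} with uniform $C^1$ (indeed $C^2$) bounds on $\phi_\mu$ that do not depend on $\mu$. It has three steps.

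\textbf{Step 1: uniform regularity of $\phi_\mu$.} Since $S_\mu\ge 0$, we have $\|(BS_\mu+I)^{-1}\|\le 1$. Hence $\|G_\mu\|\le \|C_f\|\,\|A\|$ for every $\mu\in\mathcal P(\dom)$. Combining this with Assumption~\ref{ass:regularity_of_kx} and Proposition~\ref{prop:gradient_of_U}, I would show that $\phi_\mu\in C^2(\overline\dom)$, with $\|\phi_\mu\|_{C^2}\le L$ for a constant $L$ independent of $\mu$. This bound also gives the Lipschitz estimate $|\phi_\mu(x)-\phi_\mu(y)|\le L|x-y|$.

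\textbf{Step 2: a quadratic upper and lower bound ("Taylor" in $W_2$).} Take $\mu,\nu\in\mathcal P_2(\dom)$ and set $S_{\nu}-S_\mu=S_{\nu-\mu}$. The resolvent expansion in the proof of Proposition~\ref{prop:frechet_derivative}, with $t=1$, gives
\[
V_B(\nu)-V_B(\mu)=\int_\dom \phi_\mu\, d(\nu-\mu)+R(\mu,\nu),\qquad |R|\le C\,\|S_{\nu-\mu}\|_{HS}^2 ,
\]
with $C$ uniform by the same resolvent bound. Let $\gamma$ be an optimal plan. Then
\[
\|S_\nu-S_\mu\|_{HS}\le \int \|k'_x\otimes k'_x-k'_y\otimes k'_y\|_{HS}\,\gamma\le C' W_2(\mu,\nu),
\]
where $k'_x=C_f^{1/2}A^*k_x$ is Lipschitz in $x$ with bounded norm. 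Hence $|R|\le C W_2^2$. Next, Taylor-expanding $\phi_\mu$ (using the $C^2$ bound) under $\gamma$ gives
\[
\int\phi_\mu\,d(\nu-\mu)=\int\nabla\phi_\mu(x)\cdot(y-x)\,\gamma(dx,dy)+O(W_2^2).
\]

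\textbf{Step 3: chain rule.} First, $t\mapsto V_B(\mu_t)$ is Lipschitz with respect to $W_2$, since $|V_B(\nu)-V_B(\mu)|\le (L+CW_2)W_2$ on the bounded set $\dom$. Therefore it is absolutely continuous along an AC curve, because $W_2(\mu_s,\mu_t)\le\int_s^t\|v_r\|_{L^2(\mu_r)}dr$. For a.e. $t$ the curve has the tangent property (Ambrosio--Gigli--Savar\'e, Prop.~8.5.4): with $\gamma_h$ optimal between $\mu_t$ and $\mu_{t+h}$, the rescaled displacements $(y-x)/h$ converge to $v_t$ in the sense that
\[
\frac1h\int\nabla\phi(x)\cdot(y-x)\,d\gamma_h\to\int\nabla\phi\cdot v_t\,d\mu_t
\]
for every $C^1$ function $\phi$. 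This convergence can also be obtained directly from the weak form of the continuity equation, since $\phi_{\mu_t}\in C^1(\overline\dom)$ is an admissible test function. The no-flux condition removes the boundary term:
\[
\frac{d}{ds}\int\phi_{\mu_t}\,d\mu_s\Big|_{s=t}=\int\nabla\phi_{\mu_t}\cdot v_t\,d\mu_t .
\]
Dividing the expansion of Step 2 by $h$ and using $W_2(\mu_t,\mu_{t+h})^2/h\to0$ a.e. yields the formula.

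The main obstacle is Step 2 combined with the limit in Step 3. The test function $\phi_{\mu_t}$ is frozen at time $t$, so the remainder must be genuinely $o(h)$. This needs the HS-Lipschitz estimate of $\mu\mapsto S_\mu$ in $W_2$, and I would prove that estimate first.
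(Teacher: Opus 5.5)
Your proposal is correct in substance but takes a different route from the paper. The paper never expands $V_B$ itself. It proves $\norm{S_\mu-S_\nu}\le 2M_gL_g W_2(\mu,\nu)$. It then pushes this through the resolvent identity to get $\norm{T_\mu-T_\nu}\le 2BM_gL_gW_2(\mu,\nu)$, with $T_\mu=(BS_\mu+I)^{-1}$. From there it deduces the joint estimate $\norm{\nabla\phi_\mu(x)-\nabla\phi_\nu(y)}\le C(\norm{x-y}+W_2(\mu,\nu))$ and its integrated form along optimal plans, and concludes by invoking \cite[Theorem~8.3.1]{gradientflow_book}.

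You share the two basic ingredients: the $W_2$-Lipschitz bound on $\mu\mapsto S_\mu$ (you take it in Hilbert--Schmidt norm) and $\norm{T_\mu}\le 1$. You use them for something else, namely the expansion $V_B(\nu)=V_B(\mu)+\int\phi_\mu\,d(\nu-\mu)+O(W_2(\mu,\nu)^2)$. This is Wasserstein differentiability with quadratic remainder at a frozen base point, which you then pair with the tangent property of absolutely continuous curves.

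What each route buys:
\begin{itemize}
\item Your route proves exactly what a chain rule needs. It also makes the absolute continuity of $t\mapsto V_B(\mu_t)$ explicit, via the $W_2$-Lipschitz continuity of $V_B$. The cited \cite[Theorem~8.3.1]{gradientflow_book} characterizes absolutely continuous curves through the continuity equation; it does not by itself deliver a chain rule for $V_B$. So your argument supplies the step the paper leaves to the citation.
\item The paper's route yields joint Lipschitz continuity of the velocity field $(x,\mu)\mapsto\nabla\phi_\mu(x)$. Your proof does not need this, but it is the natural hypothesis for well-posedness of \eqref{eq:ascent} and of the particle scheme in Algorithm~\ref{alg:1}.
\end{itemize}

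There are two details to tighten, plus a minor one.
\begin{itemize}
\item \cite[Prop.~8.5.4]{gradientflow_book} identifies the limit with the \emph{tangent} velocity $\bar v_t$. The theorem, however, allows any $v_t$ solving the continuity equation. You therefore need $\int\nabla\psi\cdot(v_t-\bar v_t)\,d\mu_t=0$ for all $\psi\in C^1(\overline\dom)$ simultaneously, for a.e.\ $t$. This follows by testing with a countable $C^1$-dense family.
\item Your ``direct'' weak-form alternative has the analogous issue. For a fixed $\psi$, the difference quotient of $s\mapsto\int\psi\,d\mu_s$ converges only off a null set that depends on $\psi$, while $\psi=\phi_{\mu_t}$ moves with $t$. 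Your Step~1 bound repairs this: $\{\nabla\phi_\mu\}$ is precompact in $C(\overline\dom)$ by Arzel\`a--Ascoli. A countable dense subfamily, together with bounded averages of $\norm{v_s}_{L^1(\mu_s)}$ near $t$, then gives the limit.
\item Taylor expansion along segments tacitly assumes $[x,y]\subset\overline\dom$. On a non-convex $C^1$ domain, use paths of length $\lesssim|x-y|$ instead. The paper makes the same tacit step when it calls $g$ Lipschitz.
\end{itemize}
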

\begin{proof}
For convenience, we abbreviate $g(x):=C_f^{1/2}A^*k_x\in U$ and recall that $M_g := 
\sup_{x\in\dom} \norm{g(x)}_U < \infty$.
We write the first variation from Proposition~\ref{prop:frechet_derivative} in the form
\begin{equation*}
\phi_\mu(x)=B\|C_f^{1/2}T_\mu\, g(x)\|_U^2,
\qquad
T_\mu := (BS_\mu+I)^{-1},
\end{equation*}
where $S_\mu$ is given by \eqref{eq:Smu} and, with notation above, satisfies $S_\mu = \int_X g(x)\otimes g(x)\,\mu(dx)$.

Since the mapping $x\mapsto k_x$ is $C^1$ with bounded derivative and the domain $\dom$ is bounded, the function $x\mapsto g(x)$ is Lipschitz continuous; we denote its Lipschitz constant by $L_g$. Moreover, Proposition~\ref{prop:gradient_of_U} gives a formula for $\nabla\phi_\mu$ in terms of
$g$ and $\nabla g(x):=C_f^{1/2}A^*(\nabla k_x)$; by Assumption \ref{ass:regularity_of_kx} we also have $\sup_x\|\nabla g(x)\|<\infty$ and
$\nabla g$ is Lipschitz.

Let $\mu,\nu\in\mathcal P_2(X)$ and let $\pi\in\Pi(\mu,\nu)$ be any coupling. Then
\[
S_\mu-S_\nu
=
\int_{\dom\times\dom}\big(g(x)\otimes g(x)-g(y)\otimes g(y)\big)\,\pi(dx,\,dy).
\]
Using $\|u\otimes u-v\otimes v\|\le (\|u\|+\|v\|)\,\|u-v\|$ for rank-one operators, we obtain
\[
\|S_\mu-S_\nu\|
\le
\int_{\dom\times\dom} (\|g(x)\|+\|g(y)\|)\,\|g(x)-g(y)\|\,\pi(dx,\,dy)
\le
2M_g\,L_g \int_{\dom\times\dom} \|x-y\|\,\pi(dx,\,dy).
\]
By Cauchy--Schwarz,
\[
\int_{\dom\times\dom} \|x-y\|\,\pi(dx,\,dy) \le \left(\int_{\dom\times\dom} \|x-y\|^2\,\pi(dx,\,dy)\right)^{1/2}.
\]
Taking the infimum over couplings yields
\begin{equation}\label{eq:S_lip}
\|S_\mu-S_\nu\|\le 2M_gL_g\, W_2(\mu,\nu).
\end{equation}
Since $S_\mu\ge 0$ we have $\|T_\mu\|\le 1$ for all $\mu$. Using the resolvent identity
\[
T_\mu-T_\nu = (BS_\mu+I)^{-1}B(S_\nu-S_\mu)(BS_\nu+I)^{-1},
\]
we obtain
\[
\|T_\mu-T_\nu\|\le \|T_\mu\|\,B\|S_\mu-S_\nu\|\,\|T_\nu\|\le B\|S_\mu-S_\nu\|
\le 2BM_g L_g\, W_2(\mu,\nu).
\]
Hence $\mu\mapsto T_\mu$ is $W_2$-Lipschitz in operator norm.

From Proposition~\ref{prop:gradient_of_U}, $\nabla\phi_\mu(x)$ is given by
$$
\nabla \phi_{\mu}(x) = 2B\langle C_f^{1/2}T_\mu g(x), C_f^{1/2}T_\mu \nabla g(x) \rangle_U
$$
and hence depends on $(T_\mu, g(x), \nabla g(x))$ polynomially of degree at most two. Since $C_f^{1/2}$ is a bounded linear operator, there exists $M_{C_f} = \lVert C_f^{1/2}\rVert < \infty$.
Using the bounds $\|T_\mu\|\le 1$, $\sup_x\|g(x)\|\le M_g$, $\sup_x\|\nabla g(x)\|\le M_{\nabla g}$ and the
Lipschitz estimates for $T_\mu$ and for $g,\nabla g$, one obtains the following:
there exists $C<\infty$ such that for all $\mu,\nu\in\mathcal P_2(\dom)$ and all $x,y\in \dom$,
\begin{equation}\label{eq:grad_phi_pointwise}
\|\nabla\phi_\mu(x)-\nabla\phi_\nu(y)\|
\le
C\big(\|x-y\| + W_2(\mu,\nu)\big).
\end{equation}
  Let $\pi\in\Pi_{\mathrm{opt}}(\mu,\nu)$ be any optimal coupling, then  $\int \|x-y\|^2\,d\pi = W_2(\mu,\nu)^2$, hence
\begin{equation}\label{eq:plan_lip}
\int_{X\times X}\bigl\|\nabla \phi_\mu(x)-\nabla \phi_\nu(y)\bigr\|^2\,\pi(dx,dy)
\le 4C^2\,W_2(\mu,\nu)^2.
\end{equation}
Therefore the hypotheses of \cite[Theorem~8.3.1]{gradientflow_book} apply to $V_B$.
Let $(\mu_t)_{t\in[0,T]}$ be a $W_2$--absolutely continuous curve with velocity field
$v_t\in L^2(\mu_t)$ solving the continuity equation
$\partial_t\mu_t + \nabla\cdot(\mu_t v_t)=0$ in the sense of distributions
(with the no-flux boundary condition imposed).
Then $t\mapsto V_B(\mu_t)$ is absolutely continuous and for a.e.\ $t\in[0,T]$,
\[
\frac{d}{dt}V_B(\mu_t)=\int_X \nabla \phi_{\mu_t}(x)\cdot v_t(x)\,\mu_t(dx).
\]
\end{proof}

% \subsection{Particle Gradient Flow}
% Now the characteristics in Eq. \ref{grad_flow_eq} forms the particle gradient flow for the OED objective function in Eq. \ref{eq:Umu_batch}. Here $ \frac{\delta U_B(\mu) }{\delta \mu}$
%  denotes the first variation of the functional $U$. The first variations are calculated using the Fréchet derivative for $U$. The use of Fréchet derivative instead of Gateaux derivative is justified as if the Fréchet derivative of $U$ exists, the Gateaux derivative of $U$ exists, and these coincide.\\

% \subsubsection{Gradients for the Particle Gradient Flow}
% Following \cite{2ndBatchOptArticle}, we establish the gradients of the utility function $U(\mu)$ for the Wasserstein gradient flow. First, consider the utility with one measure $\mu$.

\subsection{First Order Optimality Condition}
Having established conditions for Wasserstein differentiability, we now derive first order optimality conditions for (local) maximisers of $V_B$ in $\mathcal{P}_2$, following a similar  approach to \cite{2ndBatchOptArticle} and \cite{lanzetti2025first}. The following lemma establishes necessary and sufficient conditions for stationarity.

\begin{lemma}\label{thm:stationary_point}
Let $\dom\subset\mathbb R^n$ be a bounded, connected domain with $C^1$ boundary. Assume the hypotheses of Theorem~\ref{thm:wass_diff}. Let $(\mu_t)_{t\in[0,T]}$ be a (distributional) solution of the Wasserstein gradient ascent flow \eqref{eq:ascent}.  Then, the map $t\mapsto V_B(\mu_t)$ is non-decreasing and, for a.e.\ $t\in[0,T]$.  A measure $\mu^*\in\mathcal P_2(\dom)$ is a stationary solution of
\eqref{eq:ascent} if and only if
\begin{equation}\label{eq:stationary_weak}
\int_X \nabla\psi(x)\cdot \nabla\phi_{\mu^*}(x)\,\mu^*(dx)=0
\qquad \forall\,\psi\in C^1(\overline \dom).
\end{equation}
In particular, if $\mu^*(dx)=\rho^*(x)\,dx$ with $\rho^*\in L^1(\dom)$ and $\rho^*(x)\ge c>0$ a.e.\ on $\dom$,
then stationarity implies $\nabla\phi_{\mu^*}(x)=0$ for a.e.\ $x\in \dom$ (and hence also $\mu^*$-a.e.).
\end{lemma}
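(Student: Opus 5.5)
Three things need proof: monotonicity of $V_B$ along the flow, the weak stationarity characterization \eqref{eq:stationary_weak}, and the pointwise consequence for densities bounded below.

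\textbf{Monotonicity.} I will apply Theorem~\ref{thm:wass_diff} with the velocity field $v_t=\nabla\phi_{\mu_t}$. This is legitimate because, by \eqref{eq:grad_phi_pointwise} together with Proposition~\ref{prop:gradient_of_U}, $\nabla\phi_{\mu_t}$ is bounded uniformly in $x$ and $t$. Hence $v_t\in L^2(\mu_t)$, and the curve is $W_2$-absolutely continuous, provided the distributional solution carries the no-flux boundary condition; I would make that explicit. Theorem~\ref{thm:wass_diff} then gives, for a.e.\ $t$,
\begin{equation*}
\frac{d}{dt}V_B(\mu_t)=\int_\dom |\nabla\phi_{\mu_t}(x)|^2\,\mu_t(dx)\ \ge 0 .
\end{equation*}
Since $t\mapsto V_B(\mu_t)$ is absolutely continuous, it is non-decreasing.

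\textbf{Stationarity.} A measure $\mu^*$ is stationary exactly when the constant curve $\mu_t\equiv\mu^*$ solves \eqref{eq:ascent} distributionally with no-flux boundary conditions. With a time-independent curve the $\partial_t$ term vanishes, so this says $\nabla\cdot(\mu^*\nabla\phi_{\mu^*})=0$ weakly, with Neumann-type boundary condition. Testing against $\psi\in C^1(\overline\dom)$ and integrating by parts, with no boundary term because of the no-flux condition, gives exactly \eqref{eq:stationary_weak}. Conversely, if \eqref{eq:stationary_weak} holds, then for any space-time test function $\psi(t,x)$ we have $\int_0^T\!\int \partial_t\psi\,d\mu^*\,dt=0$ whenever $\psi$ vanishes at the time endpoints. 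The flux term vanishes by \eqref{eq:stationary_weak} applied for each fixed $t$. So the constant curve is a distributional solution.

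A useful remark: choosing $\psi=\phi_{\mu^*}$, which is $C^1(\overline\dom)$ by Proposition~\ref{prop:gradient_of_U} and Assumption~\ref{ass:regularity_of_kx}, gives $\int|\nabla\phi_{\mu^*}|^2\,d\mu^*=0$. Hence $\nabla\phi_{\mu^*}=0$ $\mu^*$-a.e. for every stationary measure, even without a density assumption.

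\textbf{The density case.} Suppose $\mu^*=\rho^*dx$ with $\rho^*\ge c>0$. From $\int|\nabla\phi_{\mu^*}|^2\rho^*\,dx=0$ we get $c\int_\dom|\nabla\phi_{\mu^*}|^2\,dx\le 0$, so $\nabla\phi_{\mu^*}=0$ Lebesgue-a.e. Because $\rho^*$ is bounded below, Lebesgue-a.e.\ and $\mu^*$-a.e.\ statements are equivalent in this direction. Since $\nabla\phi_{\mu^*}$ is continuous, it in fact vanishes everywhere on $\dom$, and connectedness of $\dom$ then makes $\phi_{\mu^*}$ constant.

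The main obstacle is the precise meaning of ``stationary solution'' and of the no-flux condition. Both the monotonicity argument and the integration by parts rely on boundary terms vanishing. I would fix the convention that distributional solutions are tested against $C^1(\overline\dom)$ functions, without requiring compact support, since this encodes the no-flux condition. With that convention, both directions of the equivalence become direct.
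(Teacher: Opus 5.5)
Your proposal is correct and follows essentially the same route as the paper. Monotonicity comes from $\frac{d}{dt}V_B(\mu_t)=\int|\nabla\phi_{\mu_t}|^2\,d\mu_t$, which you obtain rigorously from the chain rule of Theorem~\ref{thm:wass_diff} with $v_t=\nabla\phi_{\mu_t}$, where the paper integrates by parts formally; stationarity is the no-flux weak form of $\nabla\cdot(\mu^*\nabla\phi_{\mu^*})=0$; and the density case follows by testing with $\psi=\phi_{\mu^*}$. Your direct use of $\phi_{\mu^*}\in C^1(\overline\dom)$ as the test function is slightly cleaner than the paper's passage to $H^1(\dom)$ test functions, which would implicitly need $\rho^*\nabla\phi_{\mu^*}\in L^2$, and it also gives $\nabla\phi_{\mu^*}=0$ $\mu^*$-a.e.\ for every stationary measure without any density assumption.
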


\begin{proof}
% Let $\mu_t$ be the Wasserstein gradient flow for $U$, then
% $$
% \frac{d}{dt} U(\mu_t) = -\int \frac{\delta U(\mu_t)}{\delta \mu}(x) \nabla_{x}\cdot\left(\mu_t(x)\nabla_{x} \frac{\delta U(\mu_t)}{\delta \mu} (x)\right)dx.
% $$
% Integrating by parts, we obtain
% $$
% \frac{d}{dt} U(\mu_t) = \int \left|\nabla_{x} \frac{\delta U(\mu_t)}{\delta \mu} \right|^2 \mu_t(dx) \geq 0.
% $$
% In particular, if $\mu_t = \mu^*$ satisfies condition \eqref{eq:first_order_optimality}, then $\mu^*$ is necessarily a stationary point of the gradient flow.
Write $\phi_t:=\phi_{\mu_t}$. Formally differentiating $V_B(\mu_t)$ along the flow and using
\eqref{eq:ascent} gives
\[
\frac{d}{dt}V_B(\mu_t)
=
-\int_\dom \phi_t(x)\,\nabla\cdot\bigl(\mu_t \nabla \phi_t\bigr)(dx).
\]
Integrating by parts and using the boundary condition we obtain
\[
\frac{d}{dt}V_B(\mu_t)
=
\int_\dom \nabla \phi_t(x)\cdot \nabla \phi_t(x)\,\mu_t(dx)
=
\int_\dom \|\nabla \phi_t(x)\|^2\,\mu_t(dx)\ge 0,
\]
which establishes the monotonicity of $t\mapsto V_B(\mu_t)$.

Stationarity of \eqref{eq:ascent} means $\partial_t\mu_t=0$, i.e.\
$\nabla\cdot(\mu^*\nabla\phi_{\mu^*})=0$ with no-flux. In distributional form this is equivalent to
\eqref{eq:stationary_weak}. For the final claim, assume $\mu^*(dx)=\rho^*(x)\,dx$ with $\rho^*\ge c>0$ a.e.\ and $\phi_{\mu^*}\in H^1(\dom)$.
Then \eqref{eq:stationary_weak} implies that
\[
\int_\dom \rho^*(x)\,\nabla\phi_{\mu^*}(x)\cdot \nabla\psi(x)\,dx=0
\qquad \forall \psi\in H^1(\dom).
\]
Taking $\psi=\phi_{\mu^*}$ yields
\[
\int_\dom \rho^*(x)\,\|\nabla\phi_{\mu^*}(x)\|^2\,dx = 0,
\]
and since $\rho^*(x)\ge c>0$ a.e.\ it follows that $\nabla\phi_{\mu^*}(x)=0$ for a.e.\ $x\in \dom$.
\end{proof}

\begin{remark}
Note that stationary points within the Wasserstein gradient flow do not necessarily correspond to maximisers of $V_B$.   While the functional $V_B$ is concave in the space of signed measures, it is not geodesically / displacement-convex on $\mathcal{P}_2(\dom)$, and so we cannot rely on the typical arguments that are common for gradient flows of geodesically convex energies.
\end{remark}

 We now state the first order optimality condition for local maximisers or $V_B$ in $\mathcal{P}_2(\dom)$.
\begin{proposition}\label{prop:kkt_UB}
Assume the setting and hypotheses of Propositions~3.1--3.3. Then $\mu^*\in\mathcal P(\dom)$ is a maximiser of $V_B$ if and only if there exists a constant $c\in\mathbb R$ such that
the shifted first variation
\[
g(x):=\phi_{\mu^*}(x)-c
\]
satisfies $g(x)\le 0$ for all $x\in \dom$ and $g(x)=0$ $\mu^*$-a.e.
\end{proposition}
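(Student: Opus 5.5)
This is the standard KKT argument for concave maximisation over the probability simplex, with the concavity from Proposition~\ref{prop:concavity} supplying sufficiency. Throughout, write $\phi_\mu=\frac{\delta V_B}{\delta\mu}(\mu)$, given explicitly by Proposition~\ref{prop:frechet_derivative}. By Assumption~\ref{ass:regularity_of_kx}, $\phi_\mu$ is continuous and bounded on $\overline\dom$.

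\emph{Necessity.} Let $\mu^*$ be a maximiser. For any $\nu\in\mathcal P(\dom)$ set $\delta\mu=\nu-\mu^*$. Then $\mu^*+t\delta\mu\in\mathcal P(\dom)$ for all $t\in[0,1]$, so maximality gives
\[
\frac{d}{dt}V_B(\mu^*+t\delta\mu)\big|_{t=0^+}\le 0,
\]
and by \eqref{eq:first_variation} this reads $\int\phi_{\mu^*}\,d\nu\le\int\phi_{\mu^*}\,d\mu^*$. Set $c:=\int\phi_{\mu^*}\,d\mu^*$. Taking $\nu=\delta_x$ for each $x\in\dom$ yields $\phi_{\mu^*}(x)\le c$, i.e.\ $g\le 0$ everywhere. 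Since $\int g\,d\mu^*=0$ and $g\le 0$, it follows that $g=0$ holds $\mu^*$-a.e.

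One check is needed here: Proposition~\ref{prop:frechet_derivative} was derived for perturbations of this form, but we must confirm the derivative exists as a one-sided derivative for Dirac directions. This holds because $S_{\delta_x}$ is a rank-one operator, so the resolvent expansion in that proof applies unchanged.

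\emph{Sufficiency.} Suppose $c$ exists with $g\le 0$ on $\dom$ and $g=0$ $\mu^*$-a.e. Let $\nu\in\mathcal P(\dom)$ be arbitrary. Since $V_B(\mu)=U_B(B\mu)$ and $\mu\mapsto B\mu$ is linear, Proposition~\ref{prop:concavity} shows that $h(t):=V_B(\mu^*+t(\nu-\mu^*))$ is concave on $[0,1]$. Concavity gives $h(1)\le h(0)+h'(0^+)$, which translates to
\[
V_B(\nu)\le V_B(\mu^*)+\int\phi_{\mu^*}\,d(\nu-\mu^*).
\]
The integral on the right equals $\int g\,d\nu-\int g\,d\mu^*$, because the constant $c$ integrates to zero against the difference of two probability measures. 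By hypothesis, $\int g\,d\nu\le 0$ and $\int g\,d\mu^*=0$. Hence $V_B(\nu)\le V_B(\mu^*)$, so $\mu^*$ is a maximiser.

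The main obstacle is justifying that $h$ is differentiable at $0^+$ and that $h'(0^+)$ equals the first-variation integral, uniformly over arbitrary $\nu$, including singular ones. This reduces to the bound $\|\mathcal E(t)\|=\mathcal O(t^2)$ from the proof of Proposition~\ref{prop:frechet_derivative}. That bound holds because $\|S_{\nu-\mu^*}\|\le 2M_g^2$ and $\|(BS+I)^{-1}\|\le 1$. We also need the weighted trace of the error term to be $\mathcal O(t^2)$; this follows since $C_f$ is trace class.
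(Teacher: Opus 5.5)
Your proof is correct and follows the paper's argument: necessity comes from the nonpositive right derivative along the mixture $(1-t)\mu^*+t\nu$, tested with $\nu=\delta_x$ and $c=\int\phi_{\mu^*}\,d\mu^*$, and sufficiency comes from the concavity inequality $V_B(\nu)\le V_B(\mu^*)+\int\phi_{\mu^*}\,d(\nu-\mu^*)$. You also supply details the paper leaves implicit: the uniform $\mathcal{O}(t^2)$ remainder bound with the trace-class weighting, the reason the constant $c$ drops out, and the observation that necessity uses only maximality rather than concavity.
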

\begin{proof}
 Let $\mu^*$ be a maximiser of $V_B$ and fix any $\nu\in\mathcal P(\dom)$.
Define $\mu_t=(1-t)\mu^*+t\nu$. By concavity of $V_B$, the map
$t\mapsto V_B(\mu_t)$ is concave on $[0,1]$, hence its right derivative at $t=0$ is non-positive:
\[
0\ge \frac{d}{dt}V_B(\mu_t)\Big|_{t=0^+}.
\]
By the first variation formula this derivative equals
\[
\frac{d}{dt}V_B(\mu_t)\Big|_{t=0^+}
=\int_\dom \phi_{\mu^*}(x)\,(\nu-\mu^*)(dx)
=\int_\dom \phi_{\mu^*}(x)\,\nu(dx)-\int_\dom \phi_{\mu^*}(x)\,\mu^*(dx).
\]
Let $c:=\int \phi_{\mu^*}d\mu^*$.
Choosing $\nu=\delta_x$ shows $\phi_{\mu^*}(x)\le \int \phi_{\mu^*}\,d\mu^*$ for all $x$, hence in particular
$\phi_{\mu^*}(x)\le c$ for all $x\in\dom$.
Set $g(x)=\phi_{\mu^*}(x)-c\le 0$ on $\dom$. Since $\mu^*$ is supported on $\dom$,
\[
\int_\dom g(x)\,\mu^*(dx)=\int_\dom \phi_{\mu^*}(x)\,\mu^*(dx)-c \le 0.
\]
On the other hand, by definition of $c$ and $g\le 0$, we also have
\[
0\le \int_\dom (c-\phi_{\mu^*}(x))\,\mu^*(dx)= -\int_\dom g(x)\,\mu^*(dx),
\]
so $\int_\dom g\,d\mu^*=0$. Since $g\le 0$ everywhere, this forces $g=0$ $\mu^*$- a.e.
\\\\
Conversely, suppose there exists $c$ such that $g=\phi_{\mu^*}-c\le 0$ on $\dom$ and
$g=0$ on $\mathrm{supp}(\mu^*)$. Then for any $\nu\in\mathcal P(\dom)$
\[
\int_\dom \phi_{\mu^*}\,(\nu-\mu^*)(dx)
=\int_\dom g(x)\,\nu(dx)-\int_\dom g(x)\,\mu^*(dx)
\le 0,
\]
since $g\le 0$ and $g=0$ on $\mathrm{supp}(\mu^*)$. By concavity of $V_B$,
\[
V_B(\nu)\le V_B(\mu^*) + \frac{d}{dt}V_B((1-t)\mu^*+t\nu)\Big|_{t=0^+}
= V_B(\mu^*) + \int_\dom \phi_{\mu^*}\,(\nu-\mu^*)(dx)
\le V_B(\mu^*).
\]
Thus $\mu^*$ is a maximiser of $V_B$.
\end{proof}
% \begin{proposition}
% A measure $\mu \in \mathcal{P}(X)$ is a maximiser of $U$ if and only if there exists a constant $c\in \mathbb{R}$ such that the shifted first variation
% $$
% g(x) = \frac{\delta U(\mu^*)}{\delta \mu}(x) - c,
% $$
% satisfies $g(x) \leq 0$ for all $x \in X$ and $g(x) = 0$ for  $x \mu^*$-a.e.
% \end{proposition}
% \textcolor{red}{TODO: Add a proof or find a reference for this standard result.}

% \begin{proposition}
% A measure $\mu \in \mathcal{P}(\mathcal{X})$ such that $U(\mu) < \infty$ maximised $U$ if and only if $\frac{\delta U(\mu)}{\delta \mu} \leq 0$ and $\frac{\delta U(\mu)}{\delta \mu}(x) = 0$ for all $x \in \mbox{supp}(\mu)$.
% \end{proposition}
Generally, there are strong differences between   stationarity and global optimality.    However, we do know that the global maximiser of $U$ is itself a stationary point of $U$ in $\mathcal{P}_2$.  

\begin{proposition}\label{prop:maximiser_stationary_UB}
Let $\dom\subset\mathbb R^n$ be a bounded, connected domain with $C^1$ boundary and assume the hypotheses of Theorem~\ref{thm:wass_diff}.  Let $\mu^*\in\mathcal P_2(\dom)$
be a \emph{global maximiser} of $V_B$ over $\mathcal P_2(\dom)$. Then $\mu^*$ is a stationary solution of \eqref{eq:ascent} in the distributional sense.
\end{proposition}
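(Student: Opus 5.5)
We will show that a global maximiser $\mu^*$ satisfies the weak stationarity condition \eqref{eq:stationary_weak} of Lemma~\ref{thm:stationary_point}. There are two natural routes: go through Proposition~\ref{prop:kkt_UB}, or perturb $\mu^*$ by transport maps and use Theorem~\ref{thm:wass_diff}. We plan to use the KKT route as the primary argument, since it gives an essentially pointwise conclusion, and keep the transport argument as a self-contained alternative.

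\emph{KKT route.} By Proposition~\ref{prop:kkt_UB}, $\phi_{\mu^*}\le c$ on $\dom$ and $\phi_{\mu^*}=c$ $\mu^*$-a.e. Since $\phi_{\mu^*}$ is continuous (Proposition~\ref{prop:gradient_of_U}), the equality extends to all of $\operatorname{supp}\mu^*$. Every point $x$ of $\operatorname{supp}\mu^*$ is then a global maximum of $\phi_{\mu^*}$ over $\dom$, and $\phi_{\mu^*}\in C^1(\overline\dom)$.
\begin{itemize}
\item At interior points of $\operatorname{supp}\mu^*$ we get $\nabla\phi_{\mu^*}(x)=0$.
\item At boundary points, maximality only gives that $\nabla\phi_{\mu^*}(x)$ is an outward normal multiple, $\nabla\phi_{\mu^*}(x)=\lambda(x) n(x)$ with $\lambda\ge 0$. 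This is the main obstacle, because $\int\nabla\psi\cdot\nabla\phi_{\mu^*}\,d\mu^*$ could then pick up boundary terms $\int_{\partial\dom}\lambda\,\partial_n\psi\,d\mu^*$.
\end{itemize}

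\emph{Handling the boundary.} We would use the no-flux interpretation built into \eqref{eq:ascent} and Theorem~\ref{thm:wass_diff}: admissible velocity fields satisfy $(\mu v)\cdot n=0$ on $\partial\dom$. Accordingly, the test functions $\psi$ in \eqref{eq:stationary_weak} should be those whose gradient flow keeps $\dom$ invariant, i.e.\ $\nabla\psi\cdot n=0$ on $\partial\dom$, which is the standard Neumann weak formulation. With that convention, $\nabla\psi\cdot\nabla\phi_{\mu^*}=\lambda\,\nabla\psi\cdot n=0$ on $\partial\dom\cap\operatorname{supp}\mu^*$ and vanishes at interior support points. Hence the integrand in \eqref{eq:stationary_weak} is zero $\mu^*$-a.e. and the integral is $0$. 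If one insists on all $\psi\in C^1(\overline\dom)$, the statement can fail when $\mu^*$ charges $\partial\dom$ with $\lambda>0$, so we will state explicitly that the distributional formulation carries the no-flux boundary condition.

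\emph{Transport route (alternative).} For $\psi$ with $\nabla\psi\cdot n=0$ on $\partial\dom$, let $X_s$ be the flow of $\nabla\psi$, which maps $\overline\dom$ into itself, and set $\mu_s=(X_s)_\#\mu^*$. This is a $W_2$-absolutely continuous curve with velocity $\nabla\psi$ satisfying no-flux, and it works for either sign of $s$. By Theorem~\ref{thm:wass_diff},
\[
\frac{d}{ds}V_B(\mu_s)\Big|_{s=0}=\int\nabla\phi_{\mu^*}\cdot\nabla\psi\,d\mu^*.
\]
Continuity at $s=0$ follows from the Lipschitz estimate \eqref{eq:grad_phi_pointwise}. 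Because $\mu^*$ is a global maximiser, $s=0$ maximises $s\mapsto V_B(\mu_s)$ on a two-sided interval, so this derivative vanishes. That is exactly \eqref{eq:stationary_weak}.
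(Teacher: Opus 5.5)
Your proof is correct. Your primary (KKT) route is genuinely different from the paper's, while your ``transport route'' is essentially the paper's proof. The paper fixes $\psi\in C^1(\overline\dom)$ with $\partial_n\psi=0$ and moves $\mu^*$ along $v=\nabla\psi$. It combines global optimality with the chain rule of Theorem~\ref{thm:wass_diff} to get $\int_\dom\nabla\phi_{\mu^*}\cdot\nabla\psi\,d\mu^*\le 0$, and then repeats the argument with $-\psi$; your two-sided interval in $s$ does the same job.

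Your KKT route instead uses mass-creating perturbations $(1-t)\mu^*+t\delta_x$ through the necessary half of Proposition~\ref{prop:kkt_UB}, and so gets pointwise information. Namely, $\phi_{\mu^*}$ attains its global maximum at every point of $\operatorname{supp}\mu^*$. Hence $\nabla\phi_{\mu^*}$ vanishes at interior support points and is a nonnegative multiple of the outward normal at boundary support points. The weak identity then holds simply because its integrand is zero $\mu^*$-a.e. This gives a stronger conclusion and sidesteps two technical points of the transport argument. First, you do not need the flow of the merely continuous field $\nabla\psi$, whose uniqueness and invariance of $\overline\dom$ are not automatic. Second, you do not need to evaluate an almost-everywhere chain rule at the single time $t=0$; the paper does this without comment, and you correctly justify it via \eqref{eq:grad_phi_pointwise} in your alternative. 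The KKT route also shows that when $\mu^*(\partial\dom)=0$ you obtain \eqref{eq:stationary_weak} for all $\psi\in C^1(\overline\dom)$, exactly as Lemma~\ref{thm:stationary_point} phrases it.

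The paper's route, in turn, is intrinsic to the Wasserstein geometry and only uses competitors reachable by transport. It therefore still applies in classes not closed under mixtures, such as equally weighted $N$-point empirical measures, which is what Algorithm~\ref{alg:1} actually evolves; there the KKT characterization is unavailable.

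Your boundary caveat is accurate and consistent with the paper, whose proof likewise only reaches test functions with $\partial_n\psi=0$. One correction to your wording: this restricted class is not the standard weak form of a no-flux problem. In the standard form the test functions are unconstrained and the no-flux condition arises naturally. Restricting to $\partial_n\psi=0$ is a genuinely weaker requirement once $\mu^*$ charges $\partial\dom$, which is exactly the case you flag.
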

\begin{proof}
Fix $\psi\in C^1(\overline \dom)$ with $\partial_n\psi=0$ on $\partial \dom$ and consider the (smooth) velocity field
\[
v := \nabla\psi,
\qquad\text{so that}\qquad v\cdot n = \partial_n\psi = 0\ \text{ on }\partial \dom.
\]
Let $(\mu_t)_{t\in[0,T]}$ be any $W_2$-absolutely continuous curve solving the continuity equation
\[
\partial_t\mu_t+\nabla\cdot(\mu_t v)=0 \quad\text{in }\dom,
\qquad (\mu_t v)\cdot n=0 \quad\text{on }\partial \dom,
\]
with initial condition $\mu_0=\mu^*$; such a curve can be realised by the flow map of $v$ for small times.
By global optimality of $\mu^*$ we have $V_B(\mu_t)\le V_B(\mu^*)$ for all sufficiently small $t\ge 0$, hence
the right derivative at $t=0$ satisfies
\[
\frac{d}{dt}V_B(\mu_t)\Big|_{t=0^+}\le 0.
\]
Applying the chain rule from Theorem~\ref{thm:wass_diff}(ii) at $t=0$ gives
\[
\frac{d}{dt}V_B(\mu_t)\Big|_{t=0^+}
=
\int_\dom \nabla\phi_{\mu^*}(x)\cdot v(x)\,\mu^*(dx)
=
\int_\dom \nabla\phi_{\mu^*}(x)\cdot \nabla\psi(x)\,\mu^*(dx),
\]
and therefore
\begin{equation}\label{eq:ineq_plus}
\int_\dom \nabla\phi_{\mu^*}(x)\cdot \nabla\psi(x)\,\mu^*(dx)\le 0
\qquad \forall\,\psi\in C^1(\overline \dom)\text{ with }\partial_n\psi=0.
\end{equation}
Applying the same argument to $-\psi$ yields the reverse inequality, hence equality in \eqref{eq:ineq_plus}:
\[
\int_\dom \nabla\phi_{\mu^*}(x)\cdot \nabla\psi(x)\,\mu^*(dx)=0
\qquad \forall\,\psi\in C^1(\overline \dom)\text{ with }\partial_n\psi=0.
\]
This is  the distributional stationarity condition
$\nabla\cdot(\mu^*\nabla\phi_{\mu^*})=0$ in $\dom$ with no-flux boundary condition, as required.
\end{proof}

\begin{proposition} All local maximisers of $V_B$ are global maximisers in $\mathcal{P}_2(\dom)$.
\end{proposition}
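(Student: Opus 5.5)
The plan is the standard argument that a concave function has no spurious local maxima: combine the concavity of Proposition~\ref{prop:concavity} with the fact that linear (mixture) interpolation between probability measures stays close to the starting point in $W_2$. Concavity of $U_B$ on $\Pb$ carries over directly to $V_B(\mu)=U_B(B\mu)$ on $\mathcal P(\dom)$, because $\mu\mapsto B\mu$ is linear. I interpret ``local maximiser'' with respect to the $W_2$ topology on $\mathcal P_2(\dom)$: there is $r>0$ such that $V_B(\nu)\le V_B(\mu^*)$ whenever $W_2(\nu,\mu^*)<r$. Since $\dom$ is bounded, $\mathcal P_2(\dom)=\mathcal P(\dom)$.

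First, fix an arbitrary $\nu\in\mathcal P_2(\dom)$ and set $\mu_t=(1-t)\mu^*+t\nu$ for $t\in[0,1]$. I will show that $W_2(\mu_t,\mu^*)\to 0$ as $t\to 0$. To do so, build an explicit coupling between $\mu_t$ and $\mu^*=(1-t)\mu^*+t\mu^*$:
\begin{equation*}
\gamma_t=(1-t)(\mathrm{id},\mathrm{id})_\#\mu^*+t\,(\nu\otimes\mu^*).
\end{equation*}
The first part leaves the common mass $(1-t)\mu^*$ in place at zero cost. The second part transports the remaining mass $t$ arbitrarily, at cost at most $t\,\mathrm{diam}(\dom)^2$. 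Hence
\begin{equation*}
W_2(\mu_t,\mu^*)^2\le t\,\mathrm{diam}(\dom)^2,
\end{equation*}
so $\mu_t$ lies in the $W_2$-ball of radius $r$ around $\mu^*$ for all $t\in(0,t_0)$, where $t_0=\min\{1,\,r^2/\mathrm{diam}(\dom)^2\}$. The same conclusion holds in the weak or total-variation topology, since $\norm{\mu_t-\mu^*}_{TV}\le 2t$. The statement is therefore robust to which notion of locality is intended.

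Second, for such $t$, local maximality gives $V_B(\mu_t)\le V_B(\mu^*)$, while concavity gives
\begin{equation*}
V_B(\mu_t)\ge (1-t)V_B(\mu^*)+tV_B(\nu).
\end{equation*}
Combining the two inequalities yields $t\,V_B(\nu)\le t\,V_B(\mu^*)$. Dividing by $t>0$ gives $V_B(\nu)\le V_B(\mu^*)$. Since $\nu$ was arbitrary, $\mu^*$ is a global maximiser.

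As a consistency check, the same argument also shows that $\mu^*$ satisfies the first-order condition of Proposition~\ref{prop:kkt_UB}. The only step needing care is the first one, the $W_2$-continuity of the mixture path. This is where Wasserstein locality, which is defined through displacement, has to be reconciled with concavity, which holds only along linear interpolation. The remark above notes that $V_B$ is not displacement concave, so geodesic arguments are unavailable. Boundedness of $\dom$ removes the difficulty through the crude coupling above; on an unbounded domain one would instead bound the cost by $t\int|x-y|^2\,\nu\otimes\mu^*(dx,dy)<\infty$, which is finite by finiteness of second moments.
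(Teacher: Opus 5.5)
Your argument is correct and is essentially the paper's proof: concavity of $V_B$ along the mixture path $(1-t)\mu^*+t\nu$, combined with $W_2$-closeness of that path to $\mu^*$ for small $t$. The differences are cosmetic: you argue directly against an arbitrary $\nu$ rather than by contradiction, so you never need a global maximiser to exist, and you bound the transport cost using the product coupling and $\mathrm{diam}(\dom)$, whereas the paper uses the sharper bound $W_2(\mu_t,\mu_0)\le\sqrt{t}\,W_2(\mu^*,\mu_0)$, obtained by combining the identity coupling with an optimal one.
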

\begin{proof}
Suppose that $\mu_0$ is a local maximum of $V_B$, so that, for some $\delta > 0$, for all $\mu \in \mathcal{P}(\mathcal{X})$ such that $W_2(\mu, \mu_0) \leq \delta$ we have $V_B(\mu_0) > V_B(\mu)$.  Suppose also that $V_B(\mu_0) < V_B(\mu^*)$.  
\\\\
Consider the mixture $\mu_t = (1-t)\mu_0 + t \mu^*$, for $t\in [0,1]$, then by concavity
$$
V_B(\mu_t) \geq (1-t)V_B(\mu_0) + t V_B(\mu^*) > V_B(\mu_0).
$$
Noting that 
$$
W_2(\mu_t, \mu_0) \leq \sqrt{t} W_2(\mu^*, \mu_0),
$$
then choosing $t$ sufficiently small yields a contradiction.  It follows that $V_B(\mu_0) = V_B(\mu^*)$.
\end{proof}

While concavity implies that all local maximisers of $V_B$ are themselves global optimisers, this is not initself sufficient to establish a unique global maximiser.   To see this, we consider the following toy example.  For simplicity we work over the unit torus, to ignore effects of the boundary, however we note that similar counterexamples can be established in the more general setting.

\begin{example} Consider $\dom := \mathbb T = \mathbb R / \mathbb Z$ and let $k(x,y) := \cos\big(2\pi(x-y)\big)$ with $x,y\in\mathbb T$.
This kernel is positive semi-definite, translation invariant and admits the finite--dimensional feature representation
\[
k(x,y) = \phi(x)^\top \phi(y),
\qquad
\phi(x) :=
\begin{pmatrix}
\cos(2\pi x)\\
\sin(2\pi x)
\end{pmatrix}
\in \mathbb R^2 .
\]
The associated RKHS $H$ is two--dimensional and consists of functions of the form
\[
f(x)=\theta^\top \phi(x), \qquad \theta\in\mathbb R^2 .
\]
We consider the inverse problem, where $A(x)f = f(x)$, i.e.
\[
y = A(x)f + \varepsilon = f(x) + \varepsilon
= \theta^\top \phi(x) + \varepsilon,
\qquad \varepsilon\sim\mathcal N(0,1),
\]
with Gaussian prior
\[
\theta \sim \mathcal N(0,\sigma^2 I_2).
\]
For a given probability measure $\mu\in {\mathcal P}(\mathbb T)$, the covariance operator $B_\mu$ has the matrix representation
\begin{equation}
\label{eq:example_Bmu}
B_\mu = \int_{\mathbb T} \phi(x)\phi(x)^\top \,\mu(dx)
= \frac 12 \int_{\mathbb T} \begin{pmatrix}
    1 + \cos(4\pi x) & \sin (4\pi x) \\
    \sin(4 \pi x) & 1 - \cos(4 \pi x)
\end{pmatrix}
\,\mu(dx)
\in \mathbb R^{2\times 2}.
\end{equation}
Since $\|\phi(x)\|^2=1$ for all $x$, every feasible $B_\mu$ satisfies $\mbox{Tr}(B_\mu)=1$, and with the posterior covariance of $\theta$ given by $C_{\mathrm{post}}(\mu)
= \big(\sigma^{-2} I_2 + B_\mu\big)^{-1}$,
the expected utility with $B=1$ is
\[
U_1(\mu)
= -\tr\big(C_{\mathrm{post}}(\mu)\big)
= -\tr\big(\sigma^{-2} I_2 + B_\mu\big)^{-1}.
\]
Let $\lambda_1,\lambda_2$ be the eigenvalues of $B_\mu$.  
Since $\lambda_1+\lambda_2=1$, convexity of $t\mapsto (\sigma^{-2}+t)^{-1}$ implies
\[
U(\mu) \le
-2(\sigma^{-2}+1/2)^{-1},
\]
with equality if and only if
\begin{equation}
    \label{eq:example_condition}
B_\mu = \tfrac12 I_2.
\end{equation}
We note that with repeated eigenvalue, the symmetry of $B_\mu$ implies the latter identity. Thus the covariance operator is unique, but observe that the maximising measure need not be. In fact, there exist infinitely many maximising measures $\mu$, since the condition \eqref{eq:example_condition} is equivalent to requiring that
\begin{equation}
    \int_{\mathbb T} \cos(4 \pi x) \mu(dx) = 0\quad \text{and} \quad 
    \int_{\mathbb T} \sin(4 \pi x) \mu(dx) = 0.
\end{equation}
Consequently, characterizing when the maximising measure is unique depends intrinsically on the observational model and lies beyond the scope of this work.

% Consider non-uniqueness through the lens of a empirical measure $\mu_N = \frac1N\sum_{i=1}^N \delta_{x_i}$ for which
% the covariance is given by
% \[
% B_{\mu_N} = \frac1N\sum_{i=1}^N \phi(x_i)\phi(x_i)^\top .
% \]
% One can, for example, define the four--point empirical measure
% \[
% \mu^{(1)} := \frac14\Big(
% \delta_0 + \delta_{1/4} + \delta_{1/2} + \delta_{3/4}
% \Big).
% \]
% The corresponding feature vectors are
% \[
% \phi(0)=(1,0),\quad
% \phi(1/4)=(0,1),\quad
% \phi(1/2)=(-1,0),\quad
% \phi(3/4)=(0,-1),
% \]
% and a direct computation yields
% \[
% B_{\mu^{(1)}} = \frac12 I_2.
% \]

% Now define a shifted design
% \[
% \mu^{(2)} := \frac14\sum_{j=0}^3 \delta_{\,1/8 + j/4}.
% \]
% Since $\phi(x)\phi(x)^\top$ rotates under translations and the four--point design is symmetric,
% \[
% B_{\mu^{(2)}} = \frac12 I_2
% \]
% as well. Hence
% \[
% U(\mu^{(1)}) = U(\mu^{(2)}) = \max_{\mu\in P(\mathbb T)} U(\mu),
% \qquad
% \mu^{(1)} \neq \mu^{(2)}.
% \]
\end{example}

\section{Particle Approximation and Regularisation}
\label{sec:regularization}

In previous sections, we have established a Wasserstein gradient flow approach to optimize $U_B$ or, more precisely, the normalized utility $V_B$.
The method for optimizing $U_B$ is summarized in Algorithm~\ref{alg:1}, where we utilize the forward Euler method similar to \cite{2ndBatchOptArticle}.

\begin{algorithm}    
\caption{Particle Gradient Flow for Batch-based OED}\label{alg:1}
    \algorithmicrequire \text{} Number of particles $N$, number of iterations $T$, time step $dt$, regularization parameters $\alpha$ and $\beta$, batch size $B$, set of initial particles $x_1^0,...,x_N^0 \in \dom$, prior covariance $C_f$. \\
    \algorithmicensure \text{} Finite positive measure $\nu \in \Pb.$
\begin{algorithmic}[1]
     \For{$t = 1,...,T$}
     \For{$i = 1,..., N$}
        \State $x_i^t \leftarrow x_i^{t-1} + {\rm dt}\nabla_{x}\frac{\delta V_B(\mu_N^{t-1})}{\delta \mu_N^{t-1}}(x_i^{t-1})$
     \EndFor
     \State 
          $\mu_{N}^t \leftarrow \frac{1}{N}\sum_{i = 1}^N\delta^t_{x_i}$
     \EndFor
     \State Set $\nu = B \cdot \mu_N^T$.
\end{algorithmic}
\end{algorithm}

While Algorithm~\ref{alg:1} has theoretical guarantees of convergence, there can be practical constraints in the sensor placement task that need to be considered. For a batch size $B>1$, such constraints can involve requiring exactly $B$ locations in the output. In the vanilla implementation of the method, this is currently not guaranteed and the support of the maximizing measure can vary. Secondly, the sensors may have a physical constraints related to their relative locations, e.g., we cannot place two sensors too close to each other. Such considerations require us to separate the mass associated to each design variable in $\mu \in \Pb$ and enforce preferable solutions through regularization.

In order to enforce these ideas, we lift the problem back to the tensor space $\dom^B = \dom \times \ldots \times \dom$ and consider optimization of a product measure 
$\bmu = \otimes_{j=1}^B \mu_j$ over $\otimes_{j=1}^B {\mathcal P}(\dom)$. We formulate a utility function $\bUb$ for $\bmu$ by setting
\begin{equation}
    \label{eq:UBbmu}
    \bUb(\bmu) := U_B\left(\sum_{j=1}^B \mu_j\right) = V_B\left(\frac 1B\sum_{j=1}^B \mu_j\right).
\end{equation}
We immediately observe that $\bUb$ preserves the concavity and regularity properties of $V_B$. In the same vein, we formally have
\begin{eqnarray*}
    \frac{d}{dt} \bUb(\bmu + t \delta \bmu) \big|_{t=0} 
    & = & \frac{d}{dt} V_B\left(\sum_{j=1}^B \mu_j + t \sum_{j=1}^B \delta \mu_j\right) \\
    & = & \sum_{j=1}^B \int_\dom B \cdot \norm{C_f^{\frac 12} (B S_\mu + I)^{-1} C_f^{\frac 12} A^* k_x}_U^2 \delta \mu_j(dx) \\
    & = & \int_{\dom^B} B\sum_{j=1}^B \norm{C_f^{\frac 12} (B S_\mu + I)^{-1} C_f^{\frac 12} A^* k_{x_j}}_U^2 \delta \bmu(d\bx),
\end{eqnarray*}
where we utilize notation $\bx = (x_1, ..., x_B)$. It follows that one has 
\begin{equation*}
    \frac{\delta \bUb(\bmu)}{\delta \bmu}(\bx) = B\sum_{j=1}^B \norm{C_f^{\frac 12} (B S_\mu + I)^{-1} C_f^{\frac 12} A^* k_{x_j}}_U^2
\end{equation*}
and, consequently,
\begin{equation*}
    \partial_{j,i} \frac{\delta \bUb(\bmu)}{\delta \bmu}(\bx)
    = \left[\partial_i \frac{\delta V_B}{\delta \mu}(\mu)\right](x_j) 
\end{equation*}
where $\partial_{j,i}$ stands for the derivative applied to the component $i$ of vector $x_j$.

Next, we introduce two regularization functionals acting on such product measures and derive their role in Wasserstein gradient flows. The first functional promotes concentration within each batch, while the second introduces repulsion between different batches in state space.
The first regularization functional we utilize is the variance
\begin{equation}
\mathcal R_v(\bmu) = \sum_{j=1}^B \E^{\mu_j} |x-\E^{\mu_j} x|^2,
\end{equation}
which naturally promotes the concentration of each individual ensemble in the gradient flow.
The second regularizer is a \emph{repulsive} interaction energy based on kernel embeddings,
which encourages different measures to separate in state space. 
To define it, let $q:\dom\times\dom\to\mathbb R$ be a symmetric, continuously differentiable
positive definite kernel. Now we set
\begin{equation}
\mathcal R_r(\boldsymbol{\mu}) = \sum_{j\neq j'} \iint_{\dom \times \dom} q(x,x')\, \mu_j(dx)\, \mu_{j'}(dx').
\end{equation}
Up to additive self--interaction terms
$\int q(x,x')\,\mu_j(dx)\mu_j(dx')$,
this functional coincides with the negative squared
maximum mean discrepancy between pairs of measures in the reproducing
kernel Hilbert space associated with $q$.
As such, minimizing $-\mathcal R_r$ encourages separation of the
distributions $\mu_j$ in feature space and induces mutual repulsion in
the corresponding Wasserstein dynamics.

Adding the regularization terms to the utility function, we define
\begin{equation*}
    H_{\alpha,\beta}(\bmu) = U_B(\bmu) - \alpha \mathcal{R}_r(\bmu) - \beta \mathcal{R}_v(\bmu),
\end{equation*}
where $\alpha,\beta \geq 0$ stand for the regularization parameters,
and study the gradient flow
\begin{equation*}
     \partial_t \bmu = -\nabla_{W_2}(\bmu) = \nabla_{x}
    \left( \bmu \nabla_{x} \frac{\delta H_{\alpha, \beta}(\bmu)}{\delta \bmu}\right).
\end{equation*}

Let us next establish the first variation and gradients corresponding to the regularization terms.
\begin{proposition}
For $\mu\in\mathcal P_2(\dom)$, we have
\begin{equation}
\frac{\delta \mathcal R_v(\bmu)}{\delta\bmu}(\bx) = \sum_{j=1}^B |x_j-\E^{\mu_j} x|^2
\quad \text{and} \quad
\nabla_{x_i}
\frac{\delta\mathcal R_v(\bmu)}{\delta\bmu}(\bx) = 2 (x_i-\E^{\mu_i} x).
\end{equation}
\end{proposition}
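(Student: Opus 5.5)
Since $\mathcal R_v$ decouples into a sum over the batch components, $\mathcal R_v(\bmu)=\sum_{j=1}^B F(\mu_j)$ with $F(\nu)=\E^\nu|x|^2-|\E^\nu x|^2$, I would compute the first variation of $F$ on ${\mathcal P}_2(\dom)$ and then assemble the product-measure variation exactly as was done for $\bUb$ above: perturb $\bmu$ by perturbing each marginal $\mu_j\mapsto\mu_j+t\delta\mu_j$ and collect terms into an integral over $\dom^B$.

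For a single marginal, take a signed $\delta\nu$ with $\nu+t\delta\nu\in{\mathcal P}_2(\dom)$ for small $t$, so that $\delta\nu(\dom)=0$. The second moment is linear, giving the contribution $\int|x|^2\,\delta\nu(dx)$. For the mean, $m_t=m+t\int x\,\delta\nu(dx)$, so
\[
\frac{d}{dt}|m_t|^2\Big|_{t=0}=2\,m\cdot\int x\,\delta\nu(dx).
\]
Hence
\[
\frac{d}{dt}F(\nu+t\delta\nu)\Big|_{t=0}=\int\bigl(|x|^2-2m\cdot x\bigr)\,\delta\nu(dx)=\int|x-m|^2\,\delta\nu(dx),
\]
where the last step adds the constant $|m|^2$, which integrates to zero against $\delta\nu$. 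So $\frac{\delta F}{\delta\nu}(x)=|x-\E^\nu x|^2$, up to the usual additive constant. Finiteness of second moments on the bounded domain justifies all differentiations; they are in fact exact polynomial identities in $t$.

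Summing over $j$ gives $\sum_j\int|x-\E^{\mu_j}x|^2\,\delta\mu_j(dx)$, which is rewritten as an integral over $\dom^B$ of $\sum_j|x_j-\E^{\mu_j}x|^2$ against $\delta\bmu$, in the same formal sense as in the computation for $\bUb$. This yields the stated first variation.

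The gradient then follows by direct differentiation. Only the $i$-th summand depends on $x_i$, and $\E^{\mu_i}x$ is a constant in $x$, so $\nabla_{x_i}$ gives $2(x_i-\E^{\mu_i}x)$.

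The main point requiring care is the identification on the product space: a perturbation of a product measure is not a product of perturbations. I would handle this exactly as the paper does for $\bUb$, by interpreting $\delta\bmu$ through its marginals $\delta\mu_j$. Because the integrand is additively separable, its integral against $\delta\bmu$ depends only on those marginals, which makes the identification consistent.
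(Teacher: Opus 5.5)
Your proposal is correct and follows essentially the same route as the paper: a direct first-order computation of each component's variance along $\mu_j + t\,\delta\mu_j$, which identifies $|x-\E^{\mu_j}x|^2$ as the first variation (up to constants, since $\delta\mu_j(\dom)=0$), followed by summation into an integral over $\dom^B$ and direct differentiation in $x_i$. Writing the variance as $\E^{\nu}|x|^2-|\E^{\nu}x|^2$ makes your bookkeeping cleaner than the paper's expansion, whose $t^2$ and $t^3$ coefficients are slightly off, although its linear term is right. Reading $\delta\bmu$ through its marginals is also the correct way to make sense of the paper's formal $\delta\bmu=\otimes_j\delta\mu_j$.
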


\begin{proof}
Let $\delta\bmu = \otimes_{j=1}^B \delta \mu_j$ be a signed measure such that there exists $t_0>0$ such that $\bmu + t\delta \bmu \in \otimes_{j=1}^B\mathcal P_2(\dom)$. Write $m(\mu) = \int_\dom x \mu(dx)$ for shorthand. We have
\begin{eqnarray*}
    \mathcal R_v(\bmu+t\delta \bmu) & = & 
    \sum_{j=1}^B \int_\dom | x - m(\mu_j) - t m(\delta \mu_j)|^2 (\mu_j(dx) + t \delta \mu_j(dx)) \\
    & = & \mathcal R_v(\bmu)
    + \sum_{j=1}^B \left[t^2 |m(\delta \mu_j)|^2 + t \int_\dom |x-m(\mu_j)|^2 \delta \mu_j(dx) + t^3 |m(\delta \mu_j)|^2\right]
\end{eqnarray*}
and, therefore, 
\[
\frac{d}{dt}
\mathcal R_v(\bmu + t \delta \bmu)\Big|_{t=0}
=
\sum_{j=1}^B \left[\int_\dom |x-m(\mu_j)|^2\,\mu_j(dx)\right]
= 
\int_{\dom^B} \left[\sum_{j=1}^B|x_j-m(\mu_j)|^2\right] \bmu(d\bx).
\]
Now the formula for the gradient follows directly.
\end{proof}

% \begin{proposition} \label{Gradient of variance regularizer}
%     The gradient of $\mathcal{R}_v'$ for the Wasserstein particle gradient flow is \[\nabla_{x}\mathcal{R}_v'= \sum_{j=1}^d \left[ 2\left(x_{j} -\frac{1}{N}\sum_{i=1}^Nx_j^i\right)\left(1-\frac{1}{N}\right)\right].\]
% \end{proposition}
% \textcolor{red}{[Can we discuss this in our Friday call - I could not reproduce the same result.]}
% \textcolor{blue}{indexing fixed}
%     \begin{proof} of Proposition \ref{Gradient of variance regularizer}\\\\
%         The regularizer is \[\mathcal{R}_v = \sum_{j=1}^d \left[ \frac{1}{n} \sum_{i=1}^n(x_j^i -\overline{x}_j)^2\right],\] that is, a sum of empirical variances of the measures corresponding to each design parameter.
%         The first variation of the variance of $\mu$ is $(x -\overline{x})^2$.
%         Now considering this as the empirical variance and taking the gradient with respect to $x$ we have
%         \begin{align*}
%             \nabla_{x} \mathcal{R}_v' &= \nabla_{x} \sum_{j=1}^d (x_{j}-\overline{x}_j)^2 = \nabla_{x} \left[ \sum_{j=1}^d \left(x_{j}-\frac{1}{N} \sum_{i=1}^Nx_j^i\right)^2\right]\\
%             &= \sum_{j=1}^d 2\left( x_{j}-\frac{1}{N}\sum_{i=1}^Nx_j^i \right)
%             \left(1-\frac{1}{N} \right).
%         \end{align*}
%     \end{proof}

\begin{proposition} \label{First variation of repulsive regularizer}
    For $\mu\in\mathcal P_2(\dom)$, we have
    \begin{equation*}
        \frac{\delta {\mathcal R}_r(\bmu)}{\delta \bmu}(\bx) = 2 \sum_{j\neq j'} \iint_{\dom\times \dom} q(x_j,x)\, \mu_{j'}(dx).
    \end{equation*}
    and
    \begin{equation*}
        \nabla_{x_i} \frac{\delta {\mathcal R}_r(\bmu)}{\delta \bmu}(\bx)
        = 2 \sum_{\substack{j=1 \\ j\neq i}}^B\int_\dom \nabla_{x_i}q(x_i,x) \mu_{j}(dx).
    \end{equation*}
\end{proposition}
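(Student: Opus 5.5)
We follow the template of the proof for the variance regularizer: perturb each factor, expand the bilinear functional, and read off the linear term. Let $\delta\bmu = \otimes_{j=1}^B \delta\mu_j$ be an admissible signed perturbation, so that $\mu_j + t\delta\mu_j \in \mathcal P_2(\dom)$ for $t\in[0,t_0]$. Since $q$ is continuous on $\dom\times\dom$ and the measures are finite, every integral below is finite. By bilinearity of $(\mu,\nu)\mapsto \iint q\,d\mu\,d\nu$,
\begin{equation*}
\mathcal R_r(\bmu + t\delta\bmu) = \mathcal R_r(\bmu) + t\sum_{j\neq j'}\left[\iint q(x,x')\,\delta\mu_j(dx)\,\mu_{j'}(dx') + \iint q(x,x')\,\mu_j(dx)\,\delta\mu_{j'}(dx')\right] + t^2\sum_{j\neq j'}\iint q\,d\delta\mu_j\,d\delta\mu_{j'} .
\end{equation*}
Differentiating at $t=0$ removes the quadratic term.

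The next step uses the symmetry $q(x,x')=q(x',x)$ together with the symmetry of the index set $\{(j,j'):j\neq j'\}$ under swapping $j$ and $j'$. Relabelling shows that the two linear sums coincide, which produces the factor $2$:
\begin{equation*}
\frac{d}{dt}\mathcal R_r(\bmu+t\delta\bmu)\Big|_{t=0} = 2\sum_{j\neq j'}\int_\dom\left[\int_\dom q(x_j,x)\,\mu_{j'}(dx)\right]\delta\mu_j(dx_j).
\end{equation*}
We then rewrite this as an integral over $\dom^B$ against $\delta\bmu$, exactly as was done for $\bUb$ and $\mathcal R_v$. The integrand $\sum_{j\neq j'}\int q(x_j,x)\,\mu_{j'}(dx)$ is a sum of functions, each depending on a single coordinate $x_j$, so the first variation is this function of $\bx$, up to an additive constant.

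For the gradient, we take $\nabla_{x_i}$ of this expression. Only the terms with $j=i$ depend on $x_i$, which leaves $2\sum_{j'\neq i}\int \nabla_{x_i} q(x_i,x)\,\mu_{j'}(dx)$. Passing the derivative under the integral is justified by dominated convergence, because $q\in C^1$ and $\nabla q$ is bounded on the compact closure $\overline\dom\times\overline\dom$ (this requires $q$ to be $C^1$ up to the boundary, or at least to have bounded derivative). Renaming $j'$ as $j$ then gives the stated formula.

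The main obstacle is the bookkeeping in the first step. The componentwise perturbation $\delta\bmu$ has to be interpreted the same way the paper does for $\bUb$, so that a pairing $\int\cdot\,d\delta\mu_j$ becomes $\int\cdot\,d\delta\bmu$ over $\dom^B$. We also have to check that the symmetrisation over ordered pairs yields the factor $2$ and not $1$; the expansion above confirms that it is $2$. Note also that the integral in the stated first variation is a single integral over $\mu_{j'}$, even though it is written with $\iint$.
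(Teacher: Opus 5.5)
Your proposal is correct and is essentially the paper's own proof: expand $\mathcal R_r(\bmu+t\delta\bmu)$ bilinearly, keep the linear term, combine the two linear sums into the factor $2$ via $q(x,x')=q(x',x)$ and the relabelling $j\leftrightarrow j'$, and rewrite the result as an integral against $\delta\bmu$ on $\dom^B$. On the bookkeeping point you flagged: if $\delta\bmu$ is read as the $t$-derivative of $\otimes_j(\mu_j+t\delta\mu_j)$, the mixed terms drop out because $\delta\mu_j(\dom)=0$. Your dominated-convergence justification of the gradient formula, and your remark that the stated first variation is a single integral against $\mu_{j'}$, fill in details the paper leaves implicit.
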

\begin{proof}
Let $\delta\bmu = \otimes_{j=1}^B \delta \mu_j$ be a signed measure such that there exists $t_0>0$ such that $\bmu + t\delta \bmu \in \otimes_{j=1}^B\mathcal P_2(\dom)$. 
We first observe that
\begin{multline*}
    {\mathcal R}_r(\bmu + t \delta \bmu) - {\mathcal R}_r(\bmu) \\
    = \sum_{j\neq j'} \iint_{\dom\times \dom} q(x,x') 
    \left(t \mu_j(dx) \delta \mu_{j'}(dx') + t \delta \mu_j(dx) \mu_{j'}(dx)
    + t^2 \delta \mu_j(dx) \delta \mu_{j'}(dx')
    \right)
\end{multline*}
and, consequently,
\begin{eqnarray*}
    \frac{d}{dt} {\mathcal R}_r(\bmu + t \delta \bmu)\big|_{t=0}
    & = & 2 \sum_{j\neq j'} \iint_{\dom\times \dom} q(x,x') \delta \mu_j(dx) \mu_j(dx') \\
    & = & 2 \sum_{j=1}^B \int_\dom \left[\sum_{\substack{j'=1 \\ j'\neq j}}^B\int_\dom q(x,x') \mu_{j'}(dx') \right] \delta \mu_j(dx) \\
    & = &  \int_\dom \left[2 \sum_{j=1}^B\sum_{\substack{j'=1 \\ j'\neq j}}^B\int_\dom q(x_j,x') \mu_{j'}(dx') \right] \delta \bmu(d\bx) \\
    & = &  \int_\dom \left[2 \sum_{j\neq j'}\int_\dom q(x_j,x') \mu_{j'}(dx') \right] \delta \bmu(d\bx).
\end{eqnarray*}
This yields the claim.
\end{proof}

% The first variation is calculated by using the Gateaux derivative.
% Consider the regularizer $\mathcal{R}_r$ with two measures $\mu_d, \mu_l$
% \begin{align*}
%     \mathcal{R}_r(\mu_d,\mu_l) = -2\int\int q(x_d,x_l)\mu_k(dx_d)\mu_l(dx_l).
% \end{align*}
%  For $\epsilon > 0 $ and $\mu + \epsilon\nu \in \mathcal{P}(\dom)$
% \begin{align*}
%     \frac{\partial \mathcal{R}(\mu_d +\epsilon \nu,\mu_l )}{\partial \epsilon} \Big|_{\epsilon=0} = \frac{\partial }{\partial \epsilon}\left[
%     -2 \int_\dom \int_\dom q(x_d,x_l) (\mu_d+ \epsilon \nu)(dx_d)\mu_l(dx_l)\right]\Big|_{\epsilon=0} =   -2\int_\dom q(x_d,x_l)\mu_l(dx_l)
% \end{align*}
  
% \begin{proposition}\label{gradient MMD reg}
%     The gradient of $\mathcal{R}_r'$ for the Wasserstein particle gradient flow is the following.
%     \begin{align*}
%         \nabla_{x_d} \mathcal{R}_r(\mu_1,..,\mu_d)' = \sum_{d\neq l}^L \left[-\frac{2}{N} \sum_{i=1}^N \nabla q(x_d,x_l^{(i)}) \right].
%     \end{align*}
    
% \end{proposition}
% \begin{proof} of Proposition \ref{gradient MMD reg}\\\\
%     Consider the empirical measures for $\mathcal{R}_r'$ and we have
%     \begin{align*}
%         \mathcal{R}_r(\mu_{1_n},...,\mu_{d_n})' = \sum_{d\neq l}^L \left[-\frac{2}{M}  \sum_{i=1}^M q(x_d,x_l^{(i)})\right].
%     \end{align*}
%     The gradient of this with respect to $x$ is 
%     \begin{equation*}
%         \sum_{d\neq l}^L \left[-\frac{2}{N} \sum_{i=1}^N \nabla q(x_d,x_l^{(i)}) \right].
%     \end{equation*}
% \end{proof}

Algorithm \ref{alg:2} summarizes the regularized Wasserstein gradient flow.

\begin{algorithm}    
\caption{Regularized Particle Gradient Flow for Batch-based OED}\label{alg:2}
    \algorithmicrequire \text{} Batch size $B$, number of particles $N$ per ensemble, number of iterations $T$, time step $dt$, regularization parameters $\alpha$ and $\beta$, set of initial particles $\bx_1^0,...,\bx_N^0 \in \dom^B$, prior covariance $C_f$. \\
    \algorithmicensure \text{} Probability measure $\nu \in \otimes_{j=1}^B {\mathcal P}(\dom).$
\begin{algorithmic}[1]
     \For{$t = 1,...,T$}
     \For{$i = 1,..., N$}
     \For{$b = 1,...,B$}
        \State $x_{i,b}^t \leftarrow x_{i,b}^{t-1} + {\rm dt}\left[\nabla_{x_b}\frac{\delta \bUb(\bmu_N^{t-1})}{\delta \bmu}(\bx^{t-1}_i)-\alpha \nabla_{x_b}\frac{\delta {\mathcal R}_v(\bmu_N^{t-1})}{\delta \bmu}(\bx^{t-1}_i) - \beta \nabla_{x_b} \frac{\delta {\mathcal R}_r(\bmu_N^{t-1})}{\delta \bmu}(\bx^{t-1}_i)\right]$
     \EndFor
        \State $\bx_i^t = (x_{i,1}^t, ..., x_{i,B}^t)$        
     \EndFor
     \For{$b = 1,...,B$}
        \State $\mu_{N,b}^t \leftarrow \frac{1}{N}\sum_{i = 1}^N\delta^t_{x_{i,b}}$
     \EndFor
     \State $\bmu_N^t \leftarrow \otimes_{b=1}^B \mu_{N, b}^t$
     \EndFor
    \State Set $\nu = \bmu_N^T$.
\end{algorithmic}
\end{algorithm}

\section{Numerical Simulations}
\label{sec:numerical_simulations}
In this section, we demonstrate the use of the algorithms proposed in previous section with numerical examples in the setting of inverse source term problem for the Poisson's equation and the time-harmonic Schr\"odinger equation. Furthermore, we test the sensitivity of the $\mathcal{R}_v$ and $\mathcal{R}_r$ regularizers to show how the selection of the regularization parameters affects.

\subsection{Inverse source problem for one-dimensional Poisson's equation}
\label{subsec:poisson}
Consider the one-dimensional Poisson's equation with homogeneous Dirichlet boundary conditions
\begin{equation}\label{eq:poisson_1D}
\left\{
\begin{aligned}
-\dfrac{\partial^2 u}{\partial x^2}(x) &= f(x), && x \in \dom=[0,1],\\
u(x) &= 0, && x \in \partial\dom .
\end{aligned}
\right.
\end{equation}
Here, the inverse problem is to identify the source term $f$ given noisy point observations of the solution $u$ in $\dom$.
For a source $f\in H^1(0,1)$, it is well-known that the problem has a unique solution 
$u\in H$, where
\[
H := H^3(0,1)\cap H_0^1(0,1), \qquad 
\langle u,v\rangle_H := \int_\dom \frac{\partial^3 u}{\partial x^3}(x)\,\frac{\partial^3 v}{\partial x^3}(x)\,dx .
\]
By Sobolev embedding theorem, the embedding $H\hookrightarrow C^2(\dom)$ is continuous.
Consequently, point evaluation and its first two derivatives are continuous linear functionals on $H$, so $H$ is a reproducing kernel Hilbert space. In particular, the mapping $x\mapsto k_x\in H$ is twice continuously Fréchet differentiable.
We note that $C^2(\dom)$ stands here for functions on $C^2(0,1)$ that are continuously extended to the closed interval.

% The solution $u$ admits the representation
% \begin{equation*}
% u(z)=\int_{\Omega} G(z,s)\,f(s)\,ds \quad \text{with} \quad
% G(z,s)=
% \begin{cases}
% z(1-s), & z\le s,\\
% s(1-z), & z>s.
% \end{cases}
% \end{equation*}
Our objective is to identify optimal configuration of $B$ sensor locations for measuring $u$ simultaneously. In the context of previous analysis, the linear operator $A$ describes the mapping $f \mapsto u : H^1(0,1) \to H$. Each independent point observation is contaminated with additive zero-mean Gaussian noise with variance 0.01.
For the numerical implementation, the operator $A$ is discretized on a grid of $N=100$ equidistant points. Since the Green's function for \eqref{eq:poisson_1D} is classical, the output function $f$ can be evaluated continuously on $\dom$.
In this example, we impose a non-stationary zero-mean Gaussian prior distribution for $f$ based on covariance function
\begin{equation*}
    c(x,z) = \left(1 + 50\left(x-\frac 12\right)^2\right)\left(1 + 50\left(z-\frac 12\right)^2\right)
    \exp\left(-\frac{|x-z|^2}{2\sigma_0^2}\right),
\end{equation*}
where $\sigma_0 = 0.01$.
We simulate Algorithm~\ref{alg:1} using 120 particles initialized from a uniform distribution over $\dom$. For simplicity, we take a batch size $B=2$ and regularize the empirical variance with weight $\alpha=0.008$. Here, no repulsive regularizer was applied, i.e. $\beta=0$. Using 500 time steps of size $4e-3$, we illustrate the results in Figures~\ref{fig:poisson_ut} and \ref{fig:ex1 trajectories and distb}.

First, we display the expected utility for the two design variables together with the two true optimal configurations marked with red dots $(x_1,x_2)\approx(0.1616,0.8384)$ and $(x_1,x_2)\approx(0.8384, 0.1616)$ (note the symmetry). The optimal configuration found via Algorithm 1 is marked with the cross. Second, Figure~\ref{fig:ex1 trajectories and distb} shows the evolution of the particle trajectories as well as the resulting empirical distribution of the measure $\mu$. We observe that the empirical distribution converges to the optimal configuration.

% We consider a Gaussian prior covariance 
% \begin{align*}\label{Eq:gauss_pri}
% &C_f = \rm{diag}(v)\cdot C \cdot \rm{diag}(v), \text{ where}\\
%  &   C = \gamma^2\exp\left(-\frac{|t_i-t_j|^2}{2\sigma_0^2}\right), \text{ and }
%  v = 1 + 50\left( \frac{i}{N}-\left(\frac{1}{2}\right)^2\right), i = 1,...,N.
% \end{align*}
% The domain is discretized with $N=30$ discretization points. The initial set of particles is sampled from a uniform distribution on the unit interval. We consider the optimization problem with $d= 2$ design variables and apply the regularizer $\mathcal{R}_v$. Further parameters used in the simulation are presented in table \ref{tab: parameters examples}.

%Figure \ref{fig:poisson_ut} presents the utility $U_B(x_1,x_2)$  on the discretized domain with respect to the design variables $x_1,x_2$ as well as the optimal measurement locations  $x =(x_1, x_2)$.
%We obtain the following optimal measurement locations $x =(x_1, x_2)$ as presented in the Fig. \ref{fig:poisson_ut}.
\begin{figure}[h!]
    \centering
    \includegraphics[width=0.6\linewidth]{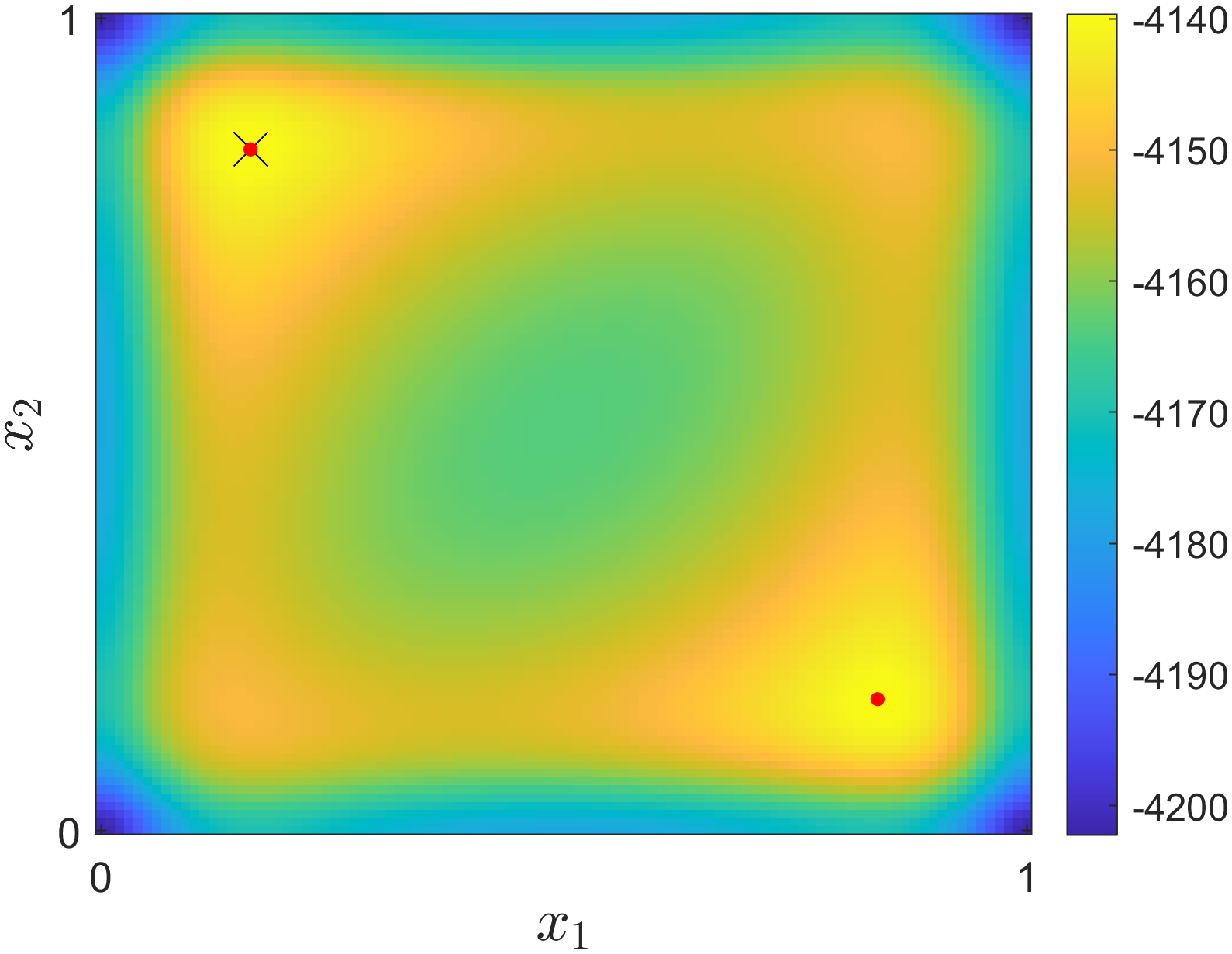}
    \caption{Expected utility w.r.t. to the design variables $x_1$ and $x_2$ in Section \ref{subsec:poisson} is illustrated. The black cross represents the optimal configuration $x \approx (0.1614,0.8386)$ found. The red dots mark the true optimal configurations.}
    \label{fig:poisson_ut}
\end{figure}

%{\color{red}
%\begin{itemize}
 %   \item Figure 1 is not symmetric? - because of the prior used
  %  \item What was the point of convergence (output/result)? - $x \approx (0.163,0.838)$
  %  \item Is this with $B=2$ or $B=1$ and the ensemble is split? - with $B =2$
   % \item Why this type of prior? - Otherwise utility plot has maximum only in the middle if only squared exp kernel used for prior
%\end{itemize}
%}

\begin{figure}[h!]
    \centering
    \includegraphics[width=0.9\linewidth]{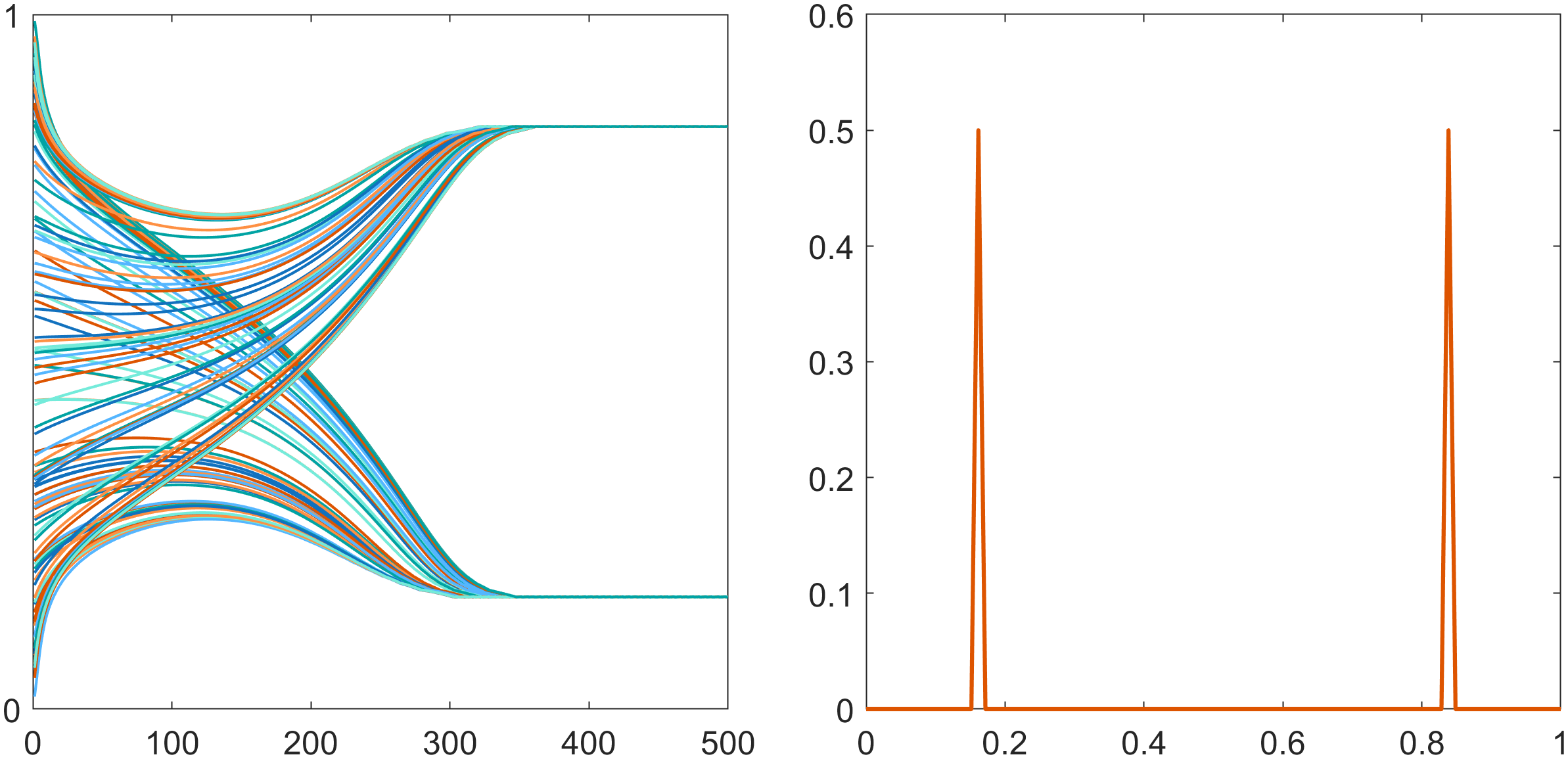}
    \caption{Trajectories of the particles through iterations on the left panel and the resulting empirical distribution of $\mu$ on the right panel is illustrated in Section \ref{subsec:poisson}.}
    \label{fig:ex1 trajectories and distb}
\end{figure}

\newpage
\subsection{Time-harmonic Schr\"odinger equation}
\label{subsec:schrodinger}
Consider the two-dimensional boundary value problem
\begin{equation}
\label{eq:timeharmonicschrodinger}
\begin{cases}
(-\Delta + V - \omega^2)u = f & \text{in } \Omega=(0,1)^2,\\
u = 0 & \text{on } \partial\Omega,
\end{cases}
\end{equation}
where $V\in L^\infty(\Omega)$ is real-valued and satisfies $V\ge 0$ a.e. in $\Omega$.
By the Rayleigh--Ritz (min--max) principle, $\lambda_1(-\Delta+V) \geq \lambda_1(-\Delta)=2\pi^2$.
Hence, if $\omega^2 < 2\pi^2$, then $\omega^2$
cannot be a Dirichlet eigenvalue of $-\Delta+V$, and \eqref{eq:timeharmonicschrodinger} admits a unique solution. The inverse problem is, given $V$ and noisy point observations of $u$, to identify $f$.

The mapping properties of \eqref{eq:timeharmonicschrodinger} are classical. Following the template of Section~\ref{subsec:poisson}, our aim is to ensure that the solution belongs to a reproducing kernel Hilbert space that is continuously embedded into $C^1(\dom)$. To this end, we recall from \cite[Chapter 6.3]{evans2022partial} %{\color{red} (Check (LATER!) if Grisvard Elliptic Problems in Nonsmooth Domains is needed as a reference for the square)} 
that for any $V \in C^\infty(\dom)$ and $f\in H^s(\dom)$, $s\geq 0$, the unique solution satisfies $u\in H^{s+2}(\dom) \cap H^1_0(\dom)$. By the Sobolev embedding theorem, any $s>1$ guarantees that $H = H^{s+2}(\dom) \cap H^1_0(\dom) \hookrightarrow C^2(\dom)$ continuously.

We define the smooth potential term by setting
\begin{align*}
        V = h_\epsilon \ast W|_{\dom}, \quad \text{where} \quad 
        W(z,y) = 200\cdot \mathbf{1}\left(\left|z-\frac 12\right|\leq 0.08\right) \cdot \mathbf{1}\left(\left|y- \frac 12\right| \leq 0.08\right) \in L^2(\R^2),
\end{align*}
where $h_\epsilon$ is Gaussian kernel with a small variance $\epsilon$ convolving $W$, which is an indicator function of a cross-shaped field.

Similarly to the Poisson example, our objective is to identify the optimal configuration of sensor locations, here, for the batch size $B=4$.
The independent point observations are contaminated with zero-mean Gaussian noise with variance 0.01. In the numerical implementation of the forward mapping $A$, both the input and output domains are discretized on a regular grid with $N=60^2$ points. For continuous evaluation of the output function, the values are interpolated based on the grid values. We impose a zero-mean squared-exponential Gaussian prior with a covariance function 
\begin{equation*}
    c(x,z) = \exp\left(-\frac{|x-z|^2}{2\sigma_0^2}\right),
\end{equation*}
where $\sigma_0 = 0.5$. We simulate Algorithm~\ref{alg:1} using 100 particles initialized from a uniform distribution over $\dom$. We regularize the empirical variance with weight $\alpha=5e-4$ and apply no repulsive regularizer, i.e. $\beta=0$. Using 120 time steps of size $8e-3$, we illustrate the results in Figure~\ref{fig:2d_ex}. First, the expected utility for a single observation is mapped on the left-hand side of the image with the reconstructed optimal sensor locations. The right-hand side plot visualizes the evolution of the ensembles towards these optimal locations.

\begin{figure}[h!]
    \centering
    \includegraphics[width=1\linewidth]{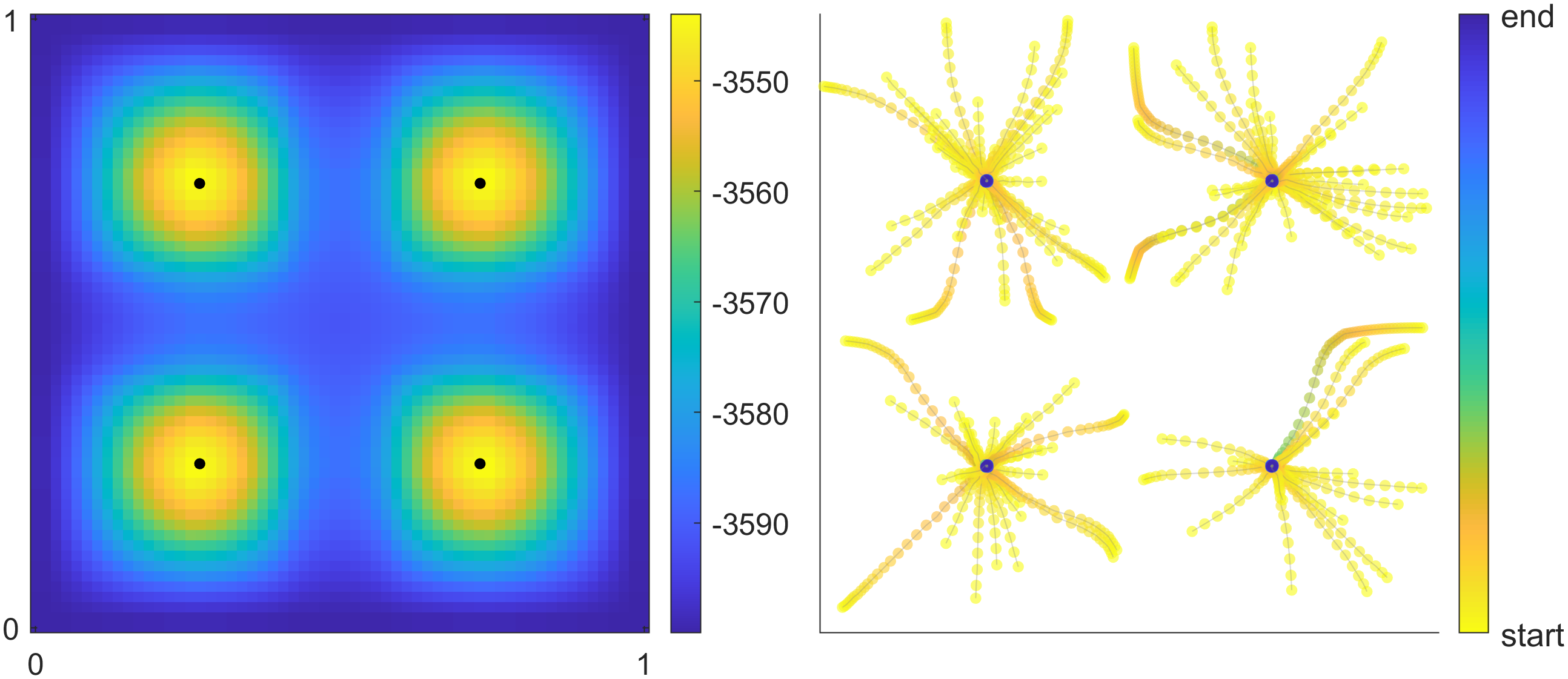}
    \caption{The expected utility for single observation and the evolution of the particle ensembles in Section \ref{subsec:schrodinger} is illustrated. The black markers on the left plot show the optimal configuration found. The plot on the right visualizes the evolution of the particles from the initial set of particles to the optimal locations. }
    \label{fig:2d_ex}
\end{figure}

\subsection{Sensitivity of regularization}
\label{subsec:sensitivity}

The regularization of the particle gradient flow relies on two distinct terms: the regularizer $\mathcal{R}_v$, which attracts particles within the same ensemble, and the regularizer $\mathcal{R}_r$, which enforces separation between two different ensembles. The best choice of the corresponding regularization parameters is influeced by various factors including the underlying model, the discretization, the number of design variables, and the time step $dt$, and, below, our aim is to illustrate these effects numerically.

\subsubsection{Regularization with $\mathcal{R}_v$}
\label{subsec:alpha sensitivity}
Consider the Poisson source identification problem in Section \ref{subsec:poisson} for batch-size $B=3$. We use the same parameters as in Section \ref{subsec:poisson} with the exception that we use 300 time steps of size $1e-5$ and initialize 900 particles from the uniform distribution such that $\mu_N^1\sim \mathcal{U}(0,\frac{1}{3})$, $\mu_N^2\sim \mathcal{U}(\frac{1}{3},\frac{2}{3})$ and $\mu_N^3\sim \mathcal{U}(\frac{2}{3},1)$.
Figure~\ref{fig:alpha_sensitivity} (left) displays the magnitude of the regularization for several parameter choices. For sufficiently large values, the measure converges to the desired number of locations before the regularization becomes dominant. In the empirical distribution plot, we observe that for smaller values of parameter $\alpha$ the particle gradient flow has not yet converged to the desired number of optimal measurement locations. To isolate the effect of $\alpha$, the parameters for the time step and its size are chosen so that the particle flow has not yet fully converged to the optimum. This is done since for this problem, when $B>2$ the gradient flow would converge to  the optimal measurement locations of which some would overlap, without the regularizer $\mathcal{R}_r$, thus making the effect of only $\alpha$ not apparent.
 
%{\color{red} (Needs more clear explanations)}

% In the plot of empirical distribution we observe that with smaller values of $\alpha$ the particle gradient flow does not yet converge to the desired number of optimal measurement locations. Note that in order to demonstrate the effect of $\alpha$, the parameters $dt$ and $T$ are chosen such that the particle flow has not yet converged to the optimal locations.

\begin{figure}[h!]
    \centering
    \includegraphics[width=\linewidth]{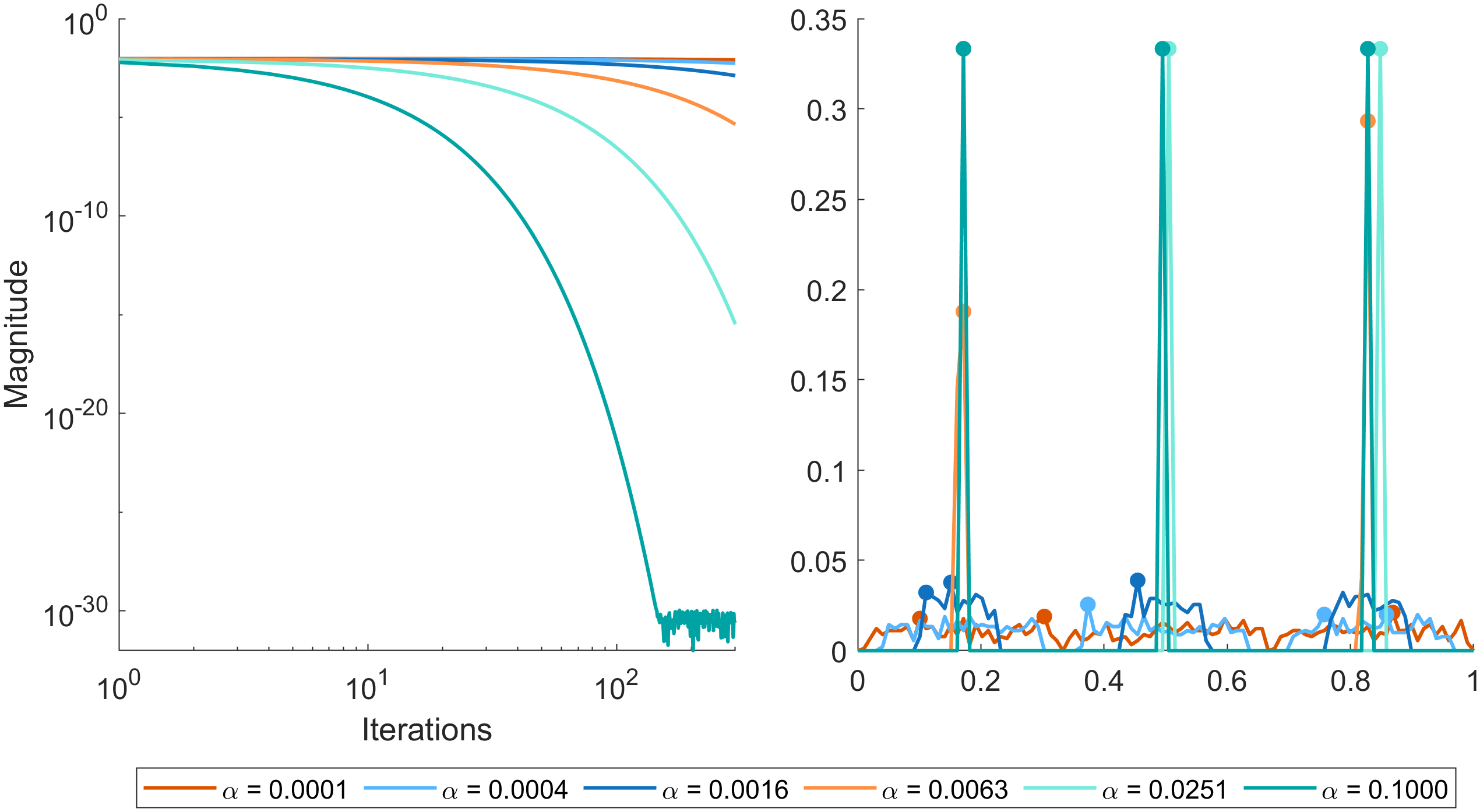}
    \caption{Effect of $\mathcal{R}_v$ illustrated in Section \ref{subsec:alpha sensitivity}. The empirical variances of the regularizer $\mathcal{R}_v$ on a log-scale are plotted on the left panel. On the right panel the resulting empirical distributions are plotted.}
    \label{fig:alpha_sensitivity}
\end{figure}

\subsubsection{Regularization with $\mathcal{R}_v$ and $\mathcal{R}_r$}
\label{subsec:beta sensitivity}
As the number of design parameters increases, the ensembles corresponding to different design variables may begin to collapse into one another. While such configurations may be optimal from an inference or information–gain perspective, practical constraints, such as the need to enforce distinct sensor locations, can make them undesirable or infeasible. To enforce separations of design variables, a repulsive regularization term can be employed in addition to $\mathcal{R}_v$. The choice of the corresponding regularization parameter $\beta$ depends on the kernel used in $\mathcal{R}_r$ and its parameters; in practice, however, $\beta$ is typically small, in comparison to the parameter $\alpha$. For instance $\beta<10^{-4}$, for the Poisson problem in Section \ref{subsec:poisson}.

In this example, we take $\mathcal{R}_r$ to be the MMD distance associated with a Gaussian kernel of standard deviation $\sigma=0.009$. We examine the effect of $\beta$ using the same one–dimensional Poisson problem as before. The batch size is set to $B=8$, and the regularization parameter for $\mathcal{R}_v$ is fixed to $\alpha=0.05$. We use 1000 time steps of size $4e-3$ and initialize 600 
particles similarly as in the example \ref{subsec:alpha sensitivity}.
The left panel of Figure~\ref{fig:beta sensitivity} shows the pairwise distances $d = (\overline{x}_j-\overline{x}_l)^2, j,l = 1,...,B,$ $j\neq l$ between the particle ensembles when the repulsive term $\mathcal{R}_r$ is applied. The right panel compares the resulting empirical distributions, from which it is evident that when $\beta$ is chosen too small the particle clouds collapse to only six optimal locations.

% With larger number of design parameters the ensembles corresponding to different design variables mcan begin to collapse together. To prevent this, repulsive regularization may be used in addition to $\mathcal{R}_v$. The values of the regularization parameter $\beta$ depend on the kernel used in the regularizer and its parameters, although the values for $\beta$ tend to be rather small, e.g. $\beta < 1e-4$.
% In this example, we utilize MMD distance associated to the Gaussian kernel with standard deviation $\sigma = 0.009$ for $\mathcal{R}_r$.
% We test the effect of $\beta$ with the same 1-dimensional Poisson's problem as previously. The batch-size is set to $B=8$ and the regularization parameter for $\mathcal{R}_v$ is fixed to $\alpha = 0.05$. Other parameters used are found in Table \ref{tab:sensitivity param}.\\
% The plot on the left of Fig. (\ref{fig:beta sensitivity}) shows the distances between different particle clouds, when $\beta R_r$ is applied. In the plot of the right of Fig. \ref{fig:beta sensitivity} a comparison of the empirical distributions achieved is shown, from which it can be seen that the particle clouds have collapsed to only 4 optimal locations, when $\beta$ is too small. 

\begin{figure}[h!]
    \centering
    \includegraphics[width=\linewidth]{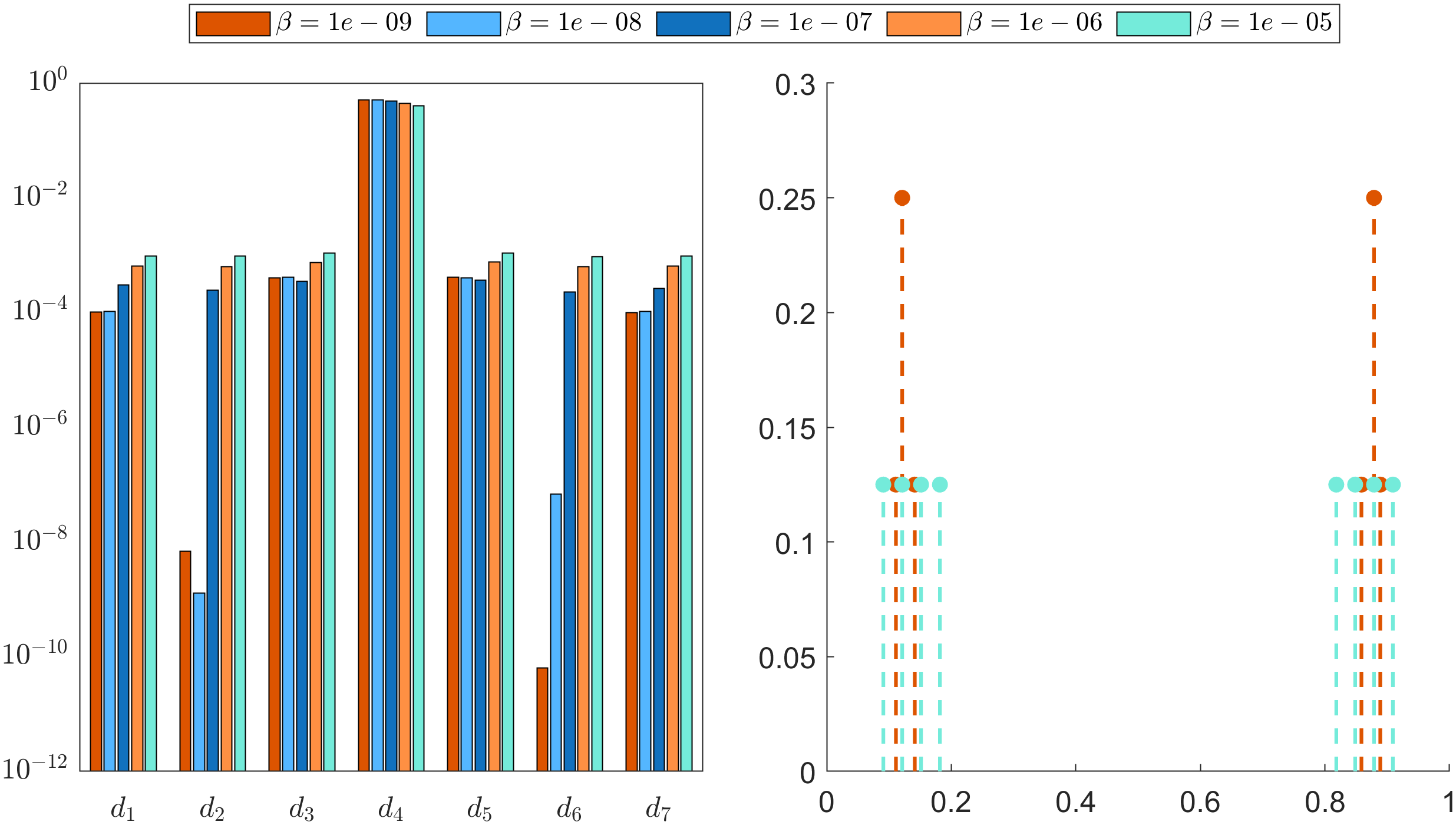}
    \caption{Effect of $\mathcal{R}_r$ illustrated in Section \ref{subsec:beta sensitivity}. The distances $d_1(\mu_N^1,\mu_N^2),...,d_7(\mu_N^7,\mu_N^8)$ between particle clouds on a log-scale are plotted on the left panel. On the right panel the comparison of two resulting empirical distributions are plotted.}
    \label{fig:beta sensitivity}
\end{figure}

\newpage
\section{Conclusion}

In this work, we established a rigorous connection between the frequentist batch-based A-optimal design problem formulated over measures and a corresponding Bayesian inference framework. By interpreting the relaxation to positive measures of fixed mass as arising from a non-parametric Gaussian observation model, we derived an infinite-dimensional expected utility functional and proved its concavity with respect to the design measure. This result provides a principled Bayesian justification for the convex relaxation previously proposed in the frequentist setting and clarifies the statistical meaning of optimizing over measures rather than discrete sensor configurations. The derivation of first variations and Wasserstein gradients further enabled the formulation of a particle-based optimization scheme grounded in gradient flow dynamics.

Beyond the theoretical characterization, we proposed a regularized tensorized formulation that lifts the optimization back to the product space while retaining the advantages of the relaxed representation. The introduced variance and maximum mean discrepancy penalties offer practical mechanisms to promote concentration of individual design measures and enforce separation between distinct sensor ensembles. Together, these elements yield a flexible framework for batch-based Bayesian experimental design over continuous domains. Future work may investigate convergence properties in greater generality, extensions to other optimality criteria, and applications to high-dimensional inverse problems arising in large-scale scientific and engineering systems.

\section*{Acknowledgement}

This work was supported by the Research Council of Finland (decisions 353094, 348504, 359183). SM was supported by Emil Aaltonen foundation. 

\bibliographystyle{plain}
\bibliography{references}

\appendix

\end{document}